\documentclass[%
 reprint,
 amsmath,amssymb,
 aps,
 prd,
]{revtex4-2}
\usepackage{import}
\usepackage{dcolumn}
\usepackage{amsmath}  
\usepackage{amsfonts} 
\usepackage{textcomp}
\usepackage[utf8]{inputenc} 
\usepackage[T1]{fontenc}    
\usepackage{hyperref}       
\usepackage{url}            
\usepackage{booktabs}       
\usepackage{nicefrac}       
\usepackage{microtype}      
\usepackage{lipsum}
\usepackage[pdftex]{graphicx}
\usepackage{caption}
\usepackage{amssymb, mathtools, amsthm}
\usepackage{mathrsfs}
\usepackage{tensor}
\usepackage[dvipsnames]{xcolor}
\usepackage{todonotes}
\usepackage{float} 
\usepackage{ragged2e}
\usepackage{scalerel}
\usepackage[normalem]{ulem}
\usepackage[english]{babel}
\usepackage{graphicx}
\usepackage{dcolumn}
\usepackage{bm}
\usepackage{blindtext}
\usepackage{verbatim}
\usepackage{relsize}
\usepackage{musicography}
\usepackage{blindtext}
\usepackage{cancel}
\usepackage{physics}
\usepackage{epstopdf}
\usepackage{mathtools}
\usepackage{blindtext}
\usepackage{color}
\usepackage{epsfig}
\usepackage{pst-grad} 
\usepackage{pst-plot} 
\usepackage{hyperref}
\usepackage{verbatim}
\usepackage{slashed}
\usepackage{subcaption}
\usepackage{dsfont}
\usepackage[english]{babel}
\usepackage{cleveref}

\newcommand{\p}{\partial}

\newcommand{\ZZ}{\mathbb{Z}}
\newcommand{\CC}{\mathbb{C}}
\newcommand{\RR}{\mathbb{{R}}}
\newcommand{\NN}{\mathbb{N}}
\newcommand{\FF}{\mathcal{F}}
\newcommand{\WW}{\mathcal{W}}
\newcommand{\GG}{\mathcal{G}}

\newcommand{\Ebar}{\overline{E}}
\newcommand{\Eabs}{|\overline{E}|}

\newcommand{\qmn}{q_{|m|n}}

\newcommand{\amn}{\alpha_{|m|n}}
\newcommand{\amN}{\alpha_{|m|N}}

\newcommand{\amo}{\alpha_{|m|0}}

\newcommand{\amx}{\alpha_{|m|x}}

\newcommand{\Nmin}{N_{\text{min}}}

\newcommand{\Nmax}{N_{\text{max}}}

\newcommand{\ii}{\mathrm{i}}

\newcommand{\ee}{\mathrm{e}}
\newcommand{\xx}{\mathsf{x}}

\usepackage{float}
\usepackage{tikz-cd}
\usepackage{amsthm}
\usepackage{geometry}
\usepackage{braket}
\DeclareMathOperator{\Ai}{Ai}

\DeclareMathOperator{\supp}{supp}

\hypersetup{  colorlinks,
              linktoc         = page, 
              urlcolor        = {green!50!blue},
              linkcolor       = {Bittersweet!80!black}, 
              urlcolor        = {Bittersweet!80!black}, 
              citecolor       = {cyan!50!black}, 
              anchorcolor     = {yellow}
}

\numberwithin{equation}{section}

\theoremstyle{plain}
\newtheorem{theo}{Theorem}[section]

\newtheorem{propo}[theo]{Proposition}

\theoremstyle{definition}

\newcommand{\mrm}[1]{\mathrm{#1}}
\newcommand{\mc}[1]{\mathcal{#1}}
\newcommand{\wt}[1]{\widetilde{#1}}

\newcommand{\lr}[1]{\!\left(#1\right)}

\newcommand{\lrb}[1]{\!\left[#1\right]}

\newcommand{\lrl}[2]{\!\left.#1\right|_{#2}}

\usepackage[final]{showlabels}

\newcommand{\td}[2]{\ensuremath{\frac{\dd #1}{\dd #2}}}
\newcommand{\tdn}[3]{\ensuremath{\frac{\dd^{#3} #1}{\dd #2^{#3}}}}

\newcommand{\beq}{\begin{equation}}
\newcommand{\eeq}{\end{equation}}

\newcommand{\ba}{\begin{align}}
\newcommand{\ea}{\end{align}}

\allowdisplaybreaks[1] 

\begin{document}

\title{Circular acceleration in Minkowski spacetime: thermality versus finite size}

\author{Cameron R D Bunney}
\email{cameron.r.d.bunney@gmail.com}
\affiliation{School of Mathematical Sciences and School of Physics and Astronomy, 
University of Nottingham, University Park, Nottingham, NG7 2RD, UK}

\author{Jorma Louko}
\email{jorma.louko@nottingham.ac.uk}
\affiliation{School of Mathematical Sciences, University of Nottingham, University Park, Nottingham, NG7 2RD, UK}

\date{September 2026}

\begin{abstract}
The Unruh effect predicts that a uniformly linearly accelerated observer with proper acceleration $a$ reacts to the Minkowski vacuum through excitations and de-excitations with the characteristics of a thermal state at temperature $T_U=a/(2\pi)$. An observer in uniform circular motion will experience similar excitations and de-excitations that we may use to operationally define an effective temperature, which however depends not only on the acceleration but also on the orbital speed and excitation energy. Motivated by the experimental interest in the circular motion Unruh effect, we investigate how spatial confinement modifies the response of an Unruh-DeWitt detector in $2+1$ Minkowski spacetime. We consider a massless scalar field confined within a circular boundary prepared in either the vacuum or a thermal state, probed by an Unruh-DeWitt detector on a circular orbit, a setting that describes proposed analogue spacetime systems for testing the effect, and in which both a boundary and an ambient temperature will necessarily be present. We establish analytic results for the detector response in the large-boundary regime and identify resonance peaks, which are more prominent when the field has an ambient temperature.
\end{abstract}

\maketitle

\section{Introduction}
The Unruh effect~\cite{Fulling,Davies1975,Unruh} is the prediction in quantum field theory that a uniformly linearly accelerated observer with proper acceleration $a$ reacts to the Minkowski vacuum as if in a thermal state at the Unruh temperature,
\begin{equation}\label{eq:T Unruh}
    T_U~=~\frac{\hbar a}{2\pi c k_{\mrm{B}}}\,.
\end{equation}
The effect is still lacking a direct experimental confirmation, in large part due to the sheer magnitude of acceleration required to reach a detectable increase in temperature. However, experimental confirmation of the Unruh effect continues to attract broad interest due to its connection to predictions such as the Hawking effect~\cite{Hawking:1975vcx}, and the quantum effects in the early Universe, from which the present-day large structure of the Universe may originate~\cite{Parker:1969au,Mukhanov:2007zz}. 

Over the last two decades, gravity simulators have opened up new avenues for exploring curved-spacetime phenomena in analogue spacetimes provided by hydrodynamical and condensed-matter systems~\cite{Unruh1981,Analogue,HeliumUniverse,SchutzholdUnruh,Vitor_EFT}. In this setting, analogue horizons have been observed both in an optical fibre~\cite{opticalHorizon} and in open-channel water flow~\cite{RousseauxHorizon}, and the stimulated Hawking spectrum has been observed~\cite{WeinfurtnerHawking}. More recently, several black-hole phenomena have been observed in $(2+1)$-dimensional hydrodynamical systems: superradiance around an analogue rotating black hole~\cite{TorresSuperradiance}; the relaxation of an analogue black hole through the emission of quasinormal modes~\cite{Ringdown}; and the observation of bound states in a giant superfluid helium vortex flow~\cite{Svancara_Helium_Vortex}. In these nonrelativistic laboratory systems, excitations (typically in the density, velocity potential, or fluid height) simulate a relativistic quantum field, with the speed of light replaced by the speed of sound in the system. The Unruh temperature~\eqref{eq:T Unruh} in an analogue spacetime is, therefore, higher by several orders of magnitude.

Unruh-like phenomena also exist for non-linear uniformly accelerated motions~\cite{Letaw,Korsbakken:2004bv,Good:2020hav}. 
In this paper we consider uniform circular motion, 
whose unique experimental selling point is that the accelerated trajectory may be kept within a finite experimental volume for an arbitrarily long time. 
An early proposal to observe the circular motion Unruh effect was to consider spin depolarisation in accelerator storage rings \cite{BellLeinaas,Bell:1986ir,Unruh:1998gq,Leinaas:1998tu}. 
A more recent proposal is to use tabletop condensed matter analogue spacetime systems, such as Bose-Einstein condensates or thin-film superfluid helium \cite{BEC1,Marino:2020uqj,Gooding,BunneySounds,Gooding:2025tfp}. While these tabletop analogue spacetime systems do not naturally incorporate relativistic time dilation, an advantage of uniform circular motion in these systems is that the time-dilation factor between the trajectory and the laboratory is a time-independent constant, and this constant may be accounted for during the data analysis~\cite{Biermann}. 
The theory of the circular motion Unruh effect has been analysed in a range of relativistic spacetime and analogue spacetime contexts  in~\cite{LetawPfautsch,Letaw,Takagi,RotatingVacuum,Korsbakken:2004bv,BibhasRickpaper,Good:2020hav,BunneyThermal,BunneyRick,Bunneydsads,ParryDRCM}.

In this paper, we address the circular motion Unruh effect for a quantum field that is confined to a finite spatial volume, in a setting that is relevant for the proposed tabletop experiments of \cite{BEC1,Marino:2020uqj,Gooding,BunneySounds,Gooding:2025tfp}. For context, we recall that an early and well-known example of finite-size effects in quantum field theory is the Casimir effect \cite{CasimirOriginal,CasimirPolderOriginal,birrell,CasimirNotVac}, which was experimentally confirmed to high accuracy in 1997~\cite{CasimirExperiment}. In curved spacetime, related phenomena occur in asymptotically anti-de Sitter (AdS) spacetimes, where a negative cosmological constant introduces a notion of confinement, and boundary conditions at the infinity are necessary to ensure unitary quantum dynamics~\cite{AdSQFT}; well-studied examples include global AdS spacetime \cite{AdSQFT} and the Ba\~nados-Teitelboim-Zanelli black hole~\cite{BTZoriginal,BTZgeometry,CarlipBTZ}. 

Spatial confinement also influences related dynamical processes in field theory, 
such as the emission of quasinormal modes by black holes, which has been studied both in relativistic spacetimes with a negative cosmological constant \cite{BTZQNM} and in analogue spacetime models with spatial confinement~\cite{SolidoroQNM}. 
A~superfluid helium vortex has recently provided an experimental setting for investigating these effects in curved analogue spacetime~\cite{SmaniottoQNMExp}.

The specific system in which we investigate the circular motion Unruh effect in this paper is $(2+1)$-dimensional flat spacetime in a cavity with a static circular boundary. We take the ambient quantum field to be initially prepared either in the vacuum state or in a thermal state. Related analyses in $3+1$ dimensions with an initial vacuum state are given in~\cite{RotatingVacuum,Marino:2020uqj,KinjalkFiniteSize}.

We take the uniform circular motion to be about the centre of the cavity. This system is invariant under time translations along the circular trajectory, both for an initial vacuum state and for an initial thermal state, and the Unruh effect will be time independent. We take the quantum field to be a real massless scalar field, and we impose at the boundary the Dirichlet boundary condition. We probe the field with a pointlike Unruh-DeWitt (UDW) detector~\cite{Unruh,DeWitt:1980hx}. In contrast to the usual relativistic setting, in which the detector's energy levels are defined with respect to the detector's relativistic proper time~\cite{Unruh,DeWitt:1980hx}, we define the detector's energy levels with respect to the Minkowski time in the cavity's rest frame. This is the setting relevant for analogue spacetime systems, in which the signal of the Unruh effect will be measured in terms of the laboratory time~\cite{Gooding}. 

We wish to quantify the effects due to the finite size of the cavity, by considering how rapidly the detector's response in the cavity tends to the response in full $2+1$ Minkowski space when the cavity radius approaches infinity, both when the field's initial state is the vacuum and when the field's initial state is a thermal state. 
To this end, we consider a detector that is coupled linearly to the Minkowski-time derivative of the scalar field along the detector's trajectory, rather than to the scalar field itself. The reason is that in full $2+1$ Minkowski space, coupling the detector to the time derivative sidesteps a technical issue that arises from the infrared divergence of the thermal state Wightman function~\cite{BunneyThermal}: the derivative-coupled detector yields for us a well-defined point of comparison in the large cavity limit, both in vacuum and in a thermal state, even if the non-derivative coupling would in itself be a well-defined system at any finite size of the cavity. Coupling the detector to the field's time derivative has been employed as a technique to sidestep a similar infrared divergence that occurs for a massless scalar field in $1+1$ dimensions already at zero temperature~\cite{1991RSPSA.435..205R,Raval:1995mb,Wang:2013lex,Juarez-Aubry:2014jba,Juarez-Aubry:2018ofz,Juarez-Aubry:2021tae}. 

To begin, we consider a scalar field that need not be massless and may have an arbitrary dispersion relation, subject to mild monotonicity conditions. 
In an analogue spacetime setting, this allows for field frequencies that go beyond the linear-dispersion regime~\cite{Analogue}; in a relativistic spacetime setting, this allows for dispersion relations that may arise from Planck-scale physics~\cite{Amelino-Camelia:2008aez,LoukoViqar,Upton}. 
We assume that the interaction lasts for a finite amount of time, and we express the detector's response in the form of a mode sum. 

We then specialise to a massless Klein-Gordon field in the limit of a long interaction time. In the large cavity limit, we recover the response in full Minkowski space~\cite{BunneyThermal}, as was to be expected. In the subleading order, the response contains resonance peaks at frequencies determined by the detector's angular velocity, and we find that these peaks are more prominent when the field's initial state is thermal. 

As a mathematical side outcome, we find an asymptotic expansion for the Bessel-function normalisation factor \eqref{eq: new asymptotic} in terms of the McMahon expansion of the large zeros of Bessel functions~\cite{NIST}. 
We have not encountered this expansion in the existing literature.

The paper is structured as follows. We begin in Section \ref{sec2} by establishing the preliminaries for a UDW detector in uniform circular motion within a static circular cavity in $(2+1)$-dimensional flat spacetime, 
coupled to the Minkowski-time derivative of a quantised, real, massless scalar field obeying the Dirichlet boundary condition at the cavity's boundary, with the field initially prepared either in the vacuum state or in a thermal state. 
Section \ref{sec:large boundary} takes the long interaction time limit and establishes the main results about the asymptotics of the detector's response in the limit of large cavity size. Section~\ref{sec:conc} provides a summary and concluding remarks. Technical asymptotic expansions are deferred to two appendices.

We use units in which $c=\hbar=k_{\mrm{B}}=1$. Sans serif letters ($\xx$) denote spacetime points and boldface Italic letters ($\bm{k}$) denote spatial vectors. We use metric signature $(-,+,+)$. In asymptotic formulæ, $f(x) = O(g(x))$ denotes that $f(x)/g(x)$ remains bounded in the limit considered, $f(x)=o(g(x))$ denotes that $f(x)/g(x)$ tends to zero in the limit considered, and $f(x)\sim g(x)$ denotes that $f(x)/g(x)$ tends to unity in the limit considered. We denote the complex conjugate of $f$ by $f^*$ and the Hermitian conjugate of $A$ by $A^\dagger$.

\section{Spacetime, field, and detector preliminaries}\label{sec2}
In this Section, we review the derivative-coupled interaction between an Unruh-DeWitt (UDW) detector and a real scalar field in ($2+1$)-dimensional Minkowski spacetime with a circular boundary. We consider both the case when the field is prepared in the vacuum state and the case when the field is prepared in a thermal state. We work in the limit of weak coupling and long interaction time, with negligible back-action of the detector on the field.

\subsection{Field and detector}\label{sec: field detector}
We work in Minkowski spacetime in $2+1$ dimensions with a standard set of Minkowski coordinates $(t,x,y)$ and the metric
\begin{equation}
    \dd s^2=-\dd t^2+\dd x^2+\dd y^2\,.
\end{equation} We consider a quantised, real scalar field $\Phi$ with a dispersion relation that is isotropic in $(x,y)$ and subject to very mild positivity and asymptotic conditions, which we will specify in~\ref{sec: mode sum}. We do not assume the dispersion relation to be Lorentz invariant. We denote by $\mc{H}_\Phi$ the Fock space in which positive frequencies are defined with respect to the timelike Killing vector $\p_t$.

In polar coordinates $(x,y)=(r\cos\theta,r\sin\theta)$, we model a cylindrical cavity by imposing the Dirichlet boundary condition at some fixed radius $r=a>0$, $\Phi(\xx)|_{r=a}=0$~\cite{RotatingVacuum}. 
We expand the field operator $\Phi$ as
\begin{equation}\label{eq:field modes}
    \Phi(\xx)~=~\sum_{m\in\ZZ}\sum_{n\in\NN}\big(\phi_{mn}(\xx)a_{mn}+\phi^*_{mn}(\xx)a^\dagger_{mn}\big)\,,
\end{equation}
where the Minkowski positive-frequency mode functions subject to the Dirichlet boundary condition are 
\begin{equation}\label{eq:field modes b}
    \phi_{mn}(\xx)~=~\frac{J_{|m|}\lr{\frac{r}{a}\qmn}}{\sqrt{2\pi a^2\omega_{mn}}|J_{|m|+1}(\qmn)|}\ee^{-\ii \omega_{mn}t+\ii m\theta}\,,
\end{equation}
$\NN=\{1,2,3,\dots\}$, 
$J_{|m|}$ is the Bessel function of the first kind of order $|m|$, and $\qmn$ is the $n$th positive zero of $J_{|m|}$~\cite{NIST}. We have written $\omega_{mn}\coloneq \omega(\qmn/a)$, where the function $\omega(|\bm{k}|)$ is the dispersion relation, giving the frequency as a function of the magnitude of the spatial momentum. We assume $\omega(|\bm{k}|)$ to be positive when $|\bm{k}|>0$, 
and we assume it to increase sufficiently fast as $|\bm{k}|\to\infty$ for the mode sums that emerge to be convergent. 
We note that since Bessel functions of different orders do not share zeros~\cite{NIST}, the denominator in \eqref{eq:field modes b} is nonzero and $\phi_{mn}$ are well defined.

The field modes $\phi_{mn}$ are normalised with respect to the Klein-Gordon inner product, $(\phi_{mn},\phi_{m'n'})=\delta_{mm'}\delta_{nn'}$. 
It follows that the annihilation $a_{mn}$ and creation $a^\dagger_{mn}$ operators obey the commutation relation $[a_{mn},a^\dagger_{m'n'}]=\delta_{mm'}\delta_{nn'}$. The Hilbert space is a Fock space with Fock vacuum~$\ket{0}$, satisfying $a_{mn}\ket{0}=0$.

We probe the field using the UDW detector model~\cite{Unruh,DeWitt:1980hx}. The detector's Hilbert space $\mc{H}_{\mrm{D}}\simeq\CC^2$ is spanned by the orthonormal basis $\{\ket{0_{\mrm{D}}},\ket{1_{\mrm{D}}}\}$. The dynamics of the detector are described by the Hamiltonian $H_{\mrm{D}}$ with respect to lab time~$t$, whose action on $\mc{H}_{\mrm{D}}$ is $H_\mrm{D}\ket{0_\mrm{D}}=0$ and $H_\mrm{D}\ket{1_{\mrm{D}}}=\Ebar \ket{1_\mrm{D}}$. The constant $\Ebar\in\RR\setminus\{0\}$ is the energy gap of the detector. For $\Ebar>0$, $\ket{0_\mrm{D}}$ is the ground state and $\ket{1_\mrm{D}}$ is the excited state. For $\Ebar <0$, the roles are reversed. The total Hilbert space of the system is $\mc{H}_\mrm{D}\otimes\mc{H}_\Phi$. We impose $\Ebar\neq0$ as the response function for a detector in inertial motion and circular motion in $2+1$ Minkowski spacetime is discontinuous at $\Ebar=0$ in the long time limit \cite{Biermann,ParryDRCM}. 

We have included in the symbol $\Ebar$ an overline to emphasise that this energy gap is defined with respect to Minkowski time~$t$, rather than relativistic proper time. 
This formulation will give us a detector response that is appropriate for describing an analogue spacetime, where $t$ is the `lab time', with respect to which any frequencies in an experiment will be measured~\cite{Biermann,Gooding,BunneySounds,BunneyThermal,Bunneythesis}. For a detector modelling a localised relativistic quantum system, the energy gap would be defined with respect to the relativistic proper time. 

We couple this two-level quantum system to the Minkowski-time derivative of the field via the monopole moment operator $\mu(t)$ and assume the detector to move along a given timelike worldline $\xx(t)$ parametrised by the Minkowski time~$t$. The interaction Hamiltonian is 
\begin{equation}\label{eq:H int}
    H_{\mrm{int}}~=~\lambda\chi(t)\mu(t)\td{}{t}\Phi(\xx(t))\,,
\end{equation}where $\lambda$ is a coupling constant, $\Phi(\xx(t))$ is the value of the field pulled back to the worldline of the detector, and $\chi$ is a real-valued switching function that specifies how the interaction is turned on and off. We highlight that the switching function $\chi$ only smears the detector in time and not in space. The UDW detector model therefore describes the interaction between the field and a pointlike detector. Spatially smeared, or finite-size, detectors have been considered in, e.g.,~\cite{SmearedDetectors,SmearedDetectors2}.

We assume that the field has been prepared in a thermal state at inverse temperature $\beta>0$, denoted by~$\ket{\beta}$, where the notion of thermality is with respect to the time evolution generated by~$\p_t$. Working to first order in perturbation theory in~$\lambda$, the probability for the detector to transition from $\ket{0_{\mrm{D}}}$ to~$\ket{1_{\mrm{D}}}$, regardless of the final state of the field, is then \cite{Unruh,birrell}
\begin{equation}
    \mc{P}(E)~=~\lambda^2|\braket{1_{\mrm{D}}|\mu(0)|0_{\mrm{D}}}|^2\FF_\chi(\Ebar,\beta)\,,
\end{equation}where $\FF_\chi$ is the response function, given by 
\begin{subequations}
\label{eq:Fchi+Wbeta}
    \begin{align}\label{eq:F chi}
       \FF_\chi(\Ebar,\beta)&=\int_{\RR^2}\!\!\!\dd t'\dd t''\chi(t')\chi(t'')\ee^{-\ii \Ebar(t'-t'')}\WW_\beta(t',t'')\,, \\
       \WW_\beta(t',t'')&=\bigg\langle\td{}{t'}\Phi(\xx(t'))\td{}{t''}\Phi(\xx(t''))\bigg\rangle_\beta\,.\label{eq:W beta}
    \end{align}
\end{subequations}
Here, $\WW_\beta$ is the correlation function of the time derivative of the pullback of the field on the detector's worldline, in the thermal state~$\ket{\beta}$. 
We refer to $\WW_\beta$ \eqref{eq:W beta} as the derivative correlation function. 

The reason we have included the time derivative in the interaction Hamiltonian \eqref{eq:H int} is that we wish to consider the $a\to\infty$ limit. For finite values of~$a$, 
we could consider a detector without the derivative in~\eqref{eq:H int}, and \eqref{eq:W beta} would then be replaced by the thermal Wightman function, without derivatives. 
When the scalar field is massless, however, the thermal Wightman function without derivatives develops an infrared divergence as $a\to\infty$, and this raises technical issues for recovering a meaningful $a\to\infty$ limit in the detector's response~\cite{BunneyThermal}. 
With the time derivative in the interaction Hamiltonian~\eqref{eq:H int}, by contrast, the relevant correlation function is $\WW_\beta$~\eqref{eq:W beta}, whose $a\to\infty$ limit is well defined. 

It follows as in \cite{Takagi} that the derivative correlation function \eqref{eq:W beta} has the mode-sum expression
\begin{multline}\label{eq:W beta mode sum general}
    \WW_\beta(t',t'')~=~\sum_{m,n}\bigg[\td{}{t'}\phi_{mn}(\xx(t'))\td{}{t''}\phi^*_{mn}(\xx(t''))\\
    +n(\beta\omega_{mn})\td{}{t'}\phi_{mn}(\xx(t'))\td{}{t''}\phi^*_{mn}(\xx(t''))\hphantom{\quad\quad\;\:\:}\\
    +n(\beta\omega_{mn})\td{}{t'}\phi^*_{mn}(\xx(t'))\td{}{t''}\phi_{mn}(\xx(t''))\bigg]\,,
\end{multline}where $n(x)$ is the Planckian factor
\begin{equation}\label{eq:n Planck}
    n(x)~=~\frac{1}{\ee^x-1}\,.
\end{equation}The mode-sum expression~\eqref{eq:W beta mode sum general} splits naturally into a contribution from the Fock vacuum $\ket{0}$ plus thermal corrections,
\begin{subequations}\label{eq:W vac therm split}
    \begin{align}\WW_\beta(t',t'')~=&~\WW_\infty(t',t'')+\Delta\WW_\beta(t',t'')\,,\\
        \WW_\infty(t',t'')~=&~\sum_{m,n}\td{}{t'}\phi_{mn}(\xx(t'))\td{}{t''}\phi^*_{mn}(\xx(t''))\,,
        \label{eq:W vac therm split vac}\\
        \Delta\WW_\beta(t',t'')~=&~ \sum_{m,n}n(\beta\omega_{mn}) \nonumber\\
        &\times \bigg[\td{}{t'}\phi_{mn}(\xx(t'))\td{}{t''}\phi^*_{mn}(\xx(t''))\nonumber\\
        &\hphantom{abc\,}+\td{}{t'}\phi^*_{mn}(\xx(t'))\td{}{t''}\phi_{mn}(\xx(t''))\bigg]\,,
        \label{eq:W vac therm split therm}
    \end{align}
\end{subequations}
where $\WW_\infty$ is the contribution from the Fock vacuum, corresponding to the $\beta\to\infty$ limit. 

We specialise now to a detector undergoing uniform circular motion, parametrised in terms of Minkowski time $t$ as  
\begin{equation}
    \xx(t)~=~(t,R\cos(\Omega t),R\sin(\Omega t))\,,
\end{equation}where $R>0$ is the radius and $\Omega=\td{\theta}{t}>0$ is the angular velocity. The orbital speed is given by $v=R\Omega$. We assume that the worldline is timelike, $v<1$. The worldline is an orbit of the helical Killing vector $\partial_t + \Omega^{-1} \partial_\theta$, and is thus stationary~\cite{Letaw,LetawPfautsch,TownsendStationary,BunneyStationary}. As the thermal state $\ket{\beta}$ is invariant under the time translation Killing vector $\partial_t$ and the rotational Killing vector~$\partial_\theta$, the derivative correlation 
function $\mc{W}_\beta(t',t'')$ is invariant under time translations along the detector's worldline~\cite{Biermann}, so that 
\begin{equation}
    \WW_\beta(t',t'')~=~\WW_\beta(t'-t'',0)\,.
\label{eq:Wbeta-stationary}
\end{equation} 

In the response function $\FF_\chi$~\eqref{eq:F chi}, we may divide by the total duration of the interaction and let the duration tend to infinity in a controlled fashion \cite{Fewster,BunneyRick}, 
which we review below in Section~\ref{sec: finite time}. 
In this limit, $\FF_\chi$ reduces to the transition probability per unit Minkowski time, or the transition rate, given by the stationary response function, 
\begin{equation}\label{eq:F def}
    \FF(\Ebar,\beta)~=~\int_{\RR}\dd t\,\ee^{-\ii \Ebar t}\WW_\beta(t,0)\,.
\end{equation}Using the vacuum-thermal decomposition~\eqref{eq:W vac therm split}, 
$\FF(\Ebar,\beta)$ decomposes into a vacuum term and a thermal correction, as 
\begin{subequations}\label{eq:F vac therm split}
    \begin{align}
        \FF(\Ebar,\beta)&~=~\FF_\infty(\Ebar)+\Delta\FF_\beta(\Ebar)\,,\\
        \FF_\infty(\Ebar)&~=~\int_\RR\dd t\,\ee^{-\ii \Ebar t}\WW_\infty(t,0)\,,
        \label{eq:F vac therm split vac}\\
        \Delta\FF_\beta(\Ebar)&~=~\int_\RR\dd t\,\ee^{-\ii \Ebar t}\Delta\WW_\beta(t,0)\,.
        \label{eq:F vac therm split therm}
    \end{align}
\end{subequations}From now on, we will work with the stationary response function $\FF(\Ebar,\beta)$ \eqref{eq:F vac therm split} and refer to it as simply the response function.

Finally, recall that $\Ebar$ was defined as the gap with respect to Minkowski time, and the stationary response function $\FF(\Ebar,\beta)$ \eqref{eq:F vac therm split} is the transition probability per unit Minkowski time, which is appropriate for describing analogue spacetime experiments. 
For a detector modelling a localised relativistic quantum system, the stationary response function that describes transition probability per unit proper time is obtained from $\FF(\Ebar,\beta)$ \eqref{eq:F vac therm split} by writing 
$\Ebar=E/\gamma$, where $E$ is the gap with respect to the detector's proper time, and including the overall factor~$1/\gamma$. 
Note that this conversion is time-independent because the speed of the circular motion is constant in the Minkowski frame $(t,x,y)$.

\subsection{Response function as a mode sum}\label{sec: mode sum}

We now write the response function as a mode sum. 

Using in \eqref{eq:F vac therm split} the field-mode expressions \eqref{eq:field modes} and the vacuum and thermal parts of the derivative correlation function~\eqref{eq:W beta mode sum general}, we find 
\begin{subequations}\label{eq:F vac plus therm}
\begin{align}
\FF(\Ebar,\beta,a)&=\FF_\infty(\Ebar,a)+\Delta\FF_\beta(\Ebar,a)\,,\\
\FF_\infty(\Ebar,a)&=\frac{\Ebar^2}{a^2}\sum_{m,n} \frac{J^2_{|m|}(\frac{R}{a}\qmn)}{\omega_{mn}J^2_{|m|+1}(\qmn)}\nonumber\\
    &\hphantom{abcdefg}\times\delta(\omega_{mn}-m\Omega+\Ebar)\,,
\label{eq:F vac}
\\
    \!\!\!\Delta\FF_\beta(\Ebar,a)&=\frac{\Ebar^2}{a^2}\sum_{m,n}n(\beta\omega_{mn})\frac{J^2_{|m|}(\frac{R}{a}\qmn)}{\omega_{mn}J^2_{|m|+1}(\qmn)}\nonumber\\
    \times\big(\delta(\omega_{mn}&-m\Omega+\Eabs)+\delta(\omega_{mn}-m\Omega-\Eabs)\big)\,,\label{eq:F therm}
\end{align}  
\end{subequations}
where we have included $a$ as the last argument of $\FF(\Ebar,\beta,a)$, $\FF_\infty(\Ebar,a)$ and $\Delta\FF_\beta(\Ebar,a)$ as a reminder of the dependence on the boundary radius~$a$.
We note that $\Delta\FF_\beta(\Ebar,a)$ is even in~$\Ebar$, and we have written \eqref{eq:F therm} in a way that makes this manifest.

The prefactor $\Ebar^2$ in 
\eqref{eq:F vac plus therm}
has come from the time derivatives in \eqref{eq:W beta mode sum general}. Dropping this prefactor would give the response function of a detector whose coupling does not have a time derivative. 
As mentioned in Section~\ref{sec: field detector}, our coupling includes the time derivative because this will enable us to consider the $a\to\infty$ limit in 
Section~\ref{sec:large boundary}. 

Note that \eqref{eq:F vac plus therm} does not assume the dispersion relation to be Lorentz invariant. The expressions in \eqref{eq:F vac plus therm} hence apply to dispersion relations that occur in condensed matter systems, such as the Bogoliubov dispersion relation in a Bose-Einstein condensate and the dispersion relation for height fluctuations in thin-film superfluid helium \cite{Gooding,BunneySounds}. 

In the special case of a massless scalar field, with the Lorentz-invariant dispersion relation 
$\omega(|\bm{k}|) = |\bm{k}|$, there is a simple necessary and sufficient condition for the detector to register excitations in the Fock vacuum: this condition is $a\Omega>1$. 
Recalling that excitations correspond to $\Ebar>0$, and using $\omega_{mn}=\qmn/a$, the argument of the Dirac delta in \eqref{eq:F vac} shows that excitations occur if and only if 
\begin{equation}
0< m\Omega-\frac{\qmn}{a}\,,
\label{eq:masslessvac-excitations}
\end{equation}
for some $m,n$. 
For $m\le0$, \eqref{eq:masslessvac-excitations} does not hold. For $m>0$, \eqref{eq:masslessvac-excitations} can be rearranged as 
\begin{equation}
\frac{q_{m n}}{m} < a\Omega\,. 
\label{eq:masslessvac-excitations-mpos}
\end{equation}
As $q_{m n}$ is increasing in $n$, and as $q_{m n}/m$ is decreasing in $m$ and satisfies $q_{m n}/m \to 1$ as $m\to\infty$ \cite[Subsections 10.21(iv) and 10.21(vii)]{NIST}, 
solutions to \eqref{eq:masslessvac-excitations-mpos} exist if and only if $a\Omega>1$. 
That $a\Omega>1$ is a necessary condition for excitations has been observed in~\cite{RotatingVacuum}. 

We recall that the detector's excitations in the Fock vacuum for $a\Omega>1$ cannot be described in terms of a Bogoliubov transformation between the Fock vacuum and a vacuum co-rotating with the detector, because the latter does not exist: the geometric reason is that points on the boundary $r=a$ that co-rotate with the detector move faster than the speed of light~\cite{RotatingVacuum}. This generalises the older observation about a detector in circular motion in full Minkowski spacetime and the non-existence of a co-rotating vacuum there~\cite{LetawPfautsch}. 
For $a\Omega<1$, by contrast, points on the boundary $r=a$ that co-rotate with the detector move slower than the speed of light, and a co-rotating vacuum exists, but it coincides with the Fock vacuum, consistently with the detector experiencing no excitations~\cite{RotatingVacuum}. 

Finally, we note that when the field is prepared in the thermal state~$\ket{\beta}$, instead of the Fock vacuum, 
a detector undergoes excitations for all values of~$\Omega$, as seen from~\eqref{eq:F vac plus therm}. 
In the limit $\Omega\to0$, the response reduces to a thermal spectrum of excitation and deexcitations, as in full Minkowski space \cite{Takagi,Costa:1994yx, Guimaraes:1998jf,BunneyThermal}.

\subsection{Finite-time interaction}\label{sec: finite time}

In this Section, we write out the passage from the finite-time response function 
$\FF_\chi(\Ebar,\beta)$
\eqref{eq:Fchi+Wbeta}
to the stationary response function 
$\FF(\Ebar,\beta)$
\eqref{eq:F def}, which we analyse in the rest of the paper. 
One reason is to highlight how the delta-peak structure in 
$\FF(\Ebar,\beta,a)$ \eqref{eq:F vac plus therm}
arises in infinite-time limit, whereas the finite time response is continuous under mild regularity conditions on the switching. Another reason is to record the explicit mode-sum formulæ for the finite time response, in the expectation that these formulæ may be applicable to future experimental implementations where the observation time will necessarily be finite. 

Using the stationarity of the derivative correlation function $\WW_\beta$ \eqref{eq:Wbeta-stationary}, the finite time response function 
$\FF_\chi(\Ebar,\beta)$
\eqref{eq:Fchi+Wbeta} can be written as \cite{Fewster,BunneyRick}
\begin{equation}\label{eq:F chi stationary}
    \FF_\chi(\Ebar,\beta)~=~\frac{1}{2\pi}\int_\RR\dd\omega\,\widehat{\WW}_\beta(\omega)|\widehat{\chi}(\Ebar-\omega)|^2\,,
\end{equation}where $\widehat{\WW}_\beta$ and $\widehat{\chi}$ denote the Fourier transforms of $\WW_\beta$ and the switching function $\chi$ respectively, 
in the convention 
\begin{equation}
    \widehat{\WW}_\beta(\omega)~=~\int_{\RR}\dd t\,\ee^{-\ii \omega t}\WW_\beta(t,0)\,,
\end{equation}
and similarly for~$\widehat\chi$. 
Note that $\widehat{\WW}_\beta(\omega)$ is the stationary response function $\FF(\Ebar,\beta)$~\eqref{eq:F def}. 
Substituting the mode-sum expression \eqref{eq:F vac plus therm} for $\widehat{\WW}_\beta(\omega)$ into~\eqref{eq:F chi stationary}, 
we find
\begin{widetext}
    \begin{align}\nonumber
        \FF_\chi(\Ebar,\beta,a)~=~&\frac{1}{2\pi a^2}\sum_{m,n}(\omega_{mn}-m\Omega)^2\frac{J^2_{|m|}(\frac{R}{a}\qmn)}{\omega_{mn}J^2_{|m|+1}(\qmn)}|\widehat{\chi}(\Ebar+\omega_{mn}-m\Omega)|^2\\
        +&\frac{1}{2\pi a^2}\sum_{m,n}(\omega_{mn}-m\Omega)^2n(\beta\omega_{mn})\frac{J^2_{|m|}(\frac{R}{a}\qmn)}{\omega_{mn}J^2_{|m|+1}(\qmn)}|\widehat{\chi}(\Eabs+\omega_{mn}-m\Omega)|^2\nonumber\\
        +&\frac{1}{2\pi a^2}\sum_{m,n}(\omega_{mn}-m\Omega)^2n(\beta\omega_{mn})\frac{J^2_{|m|}(\frac{R}{a}\qmn)}{\omega_{mn}J^2_{|m|+1}(\qmn)}|\widehat{\chi}(\Eabs-\omega_{mn}+m\Omega)|^2\,,\label{eq:finite time response}
    \end{align}
\end{widetext}
where we have again included $a$ as the last argument of $\FF_\chi(\Ebar,\beta,a)$ as a reminder of the dependence on the boundary radius~$a$. The first term in \eqref{eq:finite time response} is the Fock vacuum contribution and the last two terms are the thermal correction. 

The mode-sum formula \eqref{eq:finite time response} 
incorporates 
the Dirichlet condition at $r=a$ in the Bessel functions and their zeros~$\qmn$, 
the field's dispersion relation in 
$\omega_{mn} = \omega(\qmn/a)$, 
the finite temperature in~$\beta$, 
and the finite time effects in~$|\widehat{\chi}|^2$. 
We expect the formula to be applicable to a range of experimental implementations where these pieces of input are determined by the specific analogue spacetime system \cite{Gooding,BunneySounds}. 

The stationary response function \eqref{eq:F def} is obtained from the finite-time response function \eqref{eq:F chi stationary} as the limit in which $|\widehat{\chi}(\omega)|^2\to2\pi\delta(\omega)$. 
This is the limit in which the interaction operates at approximately constant strength for a long time, the total transition probability is divided by the interaction duration, and at the end the limit of infinite duration is taken, yielding the transition probability per unit time. 
The infinite time limit has been implemented in this form since the introduction of UDW detectors~\cite{Unruh,DeWitt:1980hx}, and recent discussions of the uniformity of the limit have been given in \cite{Fewster,BunneyRick,ParryWaiting,ParryNecessity}. 
In the present paper, the key difference between the finite-time response function $\FF_\chi$ \eqref{eq:finite time response} and the stationary response function $\FF$~\eqref{eq:F vac plus therm} is that $\FF_\chi$ is a function, under mild regularity conditions on~$\chi$, 
whereas $\FF$ is a distribution, consisting of Dirac delta peaks. Upon considering the large-boundary limit, the limit $\FF_\chi$ would hence involve a limit of a function, but the limit of $\FF$ involves the limit of a distribution. 
We shall consider the large boundary limit of $\FF$ in Section~\ref{sec:large boundary}, and will hence need to use techniques that accommodate the distributional character of~$\FF$.


\section{Large-boundary limit}\label{sec:large boundary}
In this Section, we find the leading and subleading contributions to the long time thermal response function \eqref{eq:F vac plus therm} in the large-boundary limit $a\to\infty$, with all other parameters fixed. We specialise to a massless Klein-Gordon field, with the dispersion relation $\omega_{mn}=\qmn/a$. We consider first the vacuum contribution \eqref{eq:F vac} and then the thermal corrections~\eqref{eq:F therm}.

\subsection{Vacuum contribution}\label{sec: large a vac sec}

With $\omega_{mn}=\qmn/a$, the vacuum contribution $\FF_\infty(\Ebar,a)$ \eqref{eq:F vac} reads 
\begin{multline}\label{eq:F vac KG}
    \FF_\infty(\Ebar,a)~=~\frac{\Ebar^2}{a}\sum_{m,n}\frac{J^2_{|m|}(\frac Ra\qmn)}{\qmn J^2_{|m|+1}(\qmn)}\\
    \times\delta(\qmn/a-m\Omega+\Ebar)\,. 
\end{multline}
As $\FF_\infty$ \eqref{eq:F vac KG} is a distribution but we expect (and shall find) the leading term in its $a\to\infty$ expansion to be a function, we consider the \textit{integrated response function}, or simply integrated response, defined by 
\begin{equation}\label{eq:G def}
    \GG(a)~=~\int_{\RR}\dd \Ebar\,\Ebar^{-2}\sigma(\Ebar)\FF_{\infty}(\Ebar,a)\,,
\end{equation}where $\sigma\in C^\infty_0(\RR)$ is a real-valued smooth function of compact support, 
such that either $\supp(\sigma)\subset\RR_{>0}$ or $\supp(\sigma)\subset\RR_{<0}$. 
Substituting~\eqref{eq:F vac KG} into~\eqref{eq:G def}, the integrated response reads
\begin{multline}\label{eq:G vac a}
    \GG(a)~=~\frac{1}{a}\sum_{m,n}\frac{J^2_{|m|}(\frac Ra\qmn)}{\qmn J^2_{|m|+1}(\qmn)}\sigma(m\Omega-\qmn/a)\,.
\end{multline} 

We show in Appendix \ref{app: large a vac} that 
\begin{multline}
\GG(a)~=~\int_{\RR}\dd\Ebar\,\sigma(\Ebar)\bigg[\frac12\!\!\!\!\sum_{m>\Ebar/\Omega}\!\!\!\!J^2_{|m|}(mv-\Ebar R)\\
-\frac{R^2}{a^3}\kappa\delta(|\Ebar|-\Omega)+o(a^{-3})\bigg]\,,
\label{eq:G-vacuum-result}
\end{multline}
where $\kappa$ is a numerical constant, given by 
\begin{align}
\nonumber
    \kappa~=~\frac14\bigg\{&\sum_{n=1}^\infty\bigg(\frac{\pi^3}{2}(n+\tfrac14)^2-\frac{3\pi}{16}-\frac{q_{1n}}{J^2_2(q_{1n})}\bigg)\\&+\frac{\pi(5\pi^2-18)}{128}\bigg\}\,, 
\label{eq: kappa}
\end{align}
with the approximate value $\kappa \approx 0.2$. 
From \eqref{eq:G def} we then have 
\begin{multline}\label{eq:F vac expand}
    \FF_\infty(\Ebar,a)~=~\frac{\Ebar^2}{2}\!\!\!\!\sum_{m>\Ebar/\Omega}\!\!\!J^2_{|m|}(mv-\Ebar R)\\
        -\frac{\Ebar^2R^2}{a^3}\kappa\delta(|\Ebar|-\Omega)+o(a^{-3})\,,
\end{multline}
where the $o$-notation is understood in the distributional sense. 

The leading, $a$-independent term in \eqref{eq:F vac expand} is the mode-sum representation of the response function of a detector in Minkowski vacuum in full $2+1$ Minkowski spacetime~\cite{BunneyThermal}. This was to be expected. 

The subleading term in \eqref{eq:F vac expand} is proportional to~$a^{-3}$, and it consists of resonance delta-peaks
at $\Ebar = \pm\Omega$. These peaks occur at the smallest values of $\Eabs$ where the leading term has a discontinuity in its first derivative \cite{Biermann,BunneyThermal,Bunneythesis}. Note that the subleading term is negative, showing that the boundary reduces the density of states; this is what one might have expected on physical grounds. 

As a side outcome, we establish in Appendix~\ref{app: large a vac} the asymptotic expansion
\begin{equation}\label{eq: new asymptotic}
    \frac{1}{\qmn J^2_{|m|+1}(\qmn)}~\sim~\frac{\pi}{2}X'_{|m|}(\pi\amn)\,,
\end{equation}as $n\to\infty$, where $\amn\coloneq n+\tfrac12|m|-\tfrac14$ and $X_{|m|}$ is the McMahon asymptotic expansion for large zeros of the Bessel function~\cite{NIST}. The proof of~\eqref{eq: new asymptotic} was provided by Gerg\H{o} Nemes (Tokyo Metropolitan University) \cite{Nemes-private-2024} and is given in Appendix~\ref{app: asymptotic expansion}. 
The left-hand side of \eqref{eq: new asymptotic} appears for us as the normalisation factor in~\eqref{eq:F vac KG}. We have not encountered the expansion \eqref{eq: new asymptotic} in the existing literature.

\subsection{Thermal correction}\label{sec: large a beta sec}
With $\omega_{mn}=\qmn/a$, the thermal correction~$\Delta\FF_\beta(\Ebar,a)$ \eqref{eq:F therm} reads 
    \begin{align}
    \Delta\FF_\beta(\Ebar,a)&=~\frac{\Ebar^2}{a}\sum_{m,n}n(\tfrac{\beta}{a} \qmn)\frac{J^2_{|m|}(\frac{R}{a}\qmn)}{\qmn J^2_{|m|+1}(\qmn)}\nonumber\\
    \times\big(\delta(\qmn/&a-m\Omega+\Eabs)+\delta(\qmn/a-m\Omega-\Eabs)\big)\,,\label{eq:F beta vac KG}
    \end{align}
where $n$ is the Planckian factor~\eqref{eq:n Planck}.

As with the vacuum contribution, we introduce the integrated response by
\begin{equation}\label{eq:G beta def}
    \Delta\GG_\beta(a)~=~\int_\RR\dd\Ebar\,\Ebar^{-2}\sigma(\Ebar)\Delta\FF_\beta(\Ebar,a)\,,
\end{equation}
where $\sigma\in C^\infty_0(\RR)$ is again a test function with the properties stated below~\eqref{eq:G def}. 
Substituting \eqref{eq:F beta vac KG} into~\eqref{eq:G beta def}, we find
\begin{multline}\label{eqn:G beta a}
    \Delta\GG_\beta(a)~=~\frac 1a\sum_{m,n}n(\tfrac{\beta}{a}\qmn)\frac{J^2_{|m|}(\frac R a\qmn)}{\qmn J^2_{|m|+1}(\qmn)}\\
    \times\bigg(\sigma(m\Omega-\qmn/a)+\sigma(\qmn/a-m\Omega)\bigg)\,.
\end{multline}

We show in Appendix \ref{app:large a therm} that 
\begin{multline}
\label{eq:DeltaGbeta-integral-maintext}
\Delta\GG_\beta(a)~=~\int_\RR\dd\Ebar\,\sigma(\Ebar)\bigg[\frac12\!\!\!\!\sum_{m>|\Ebar|/\Omega}\!\!\!\!n(\beta\omega_+)J^2_{|m|}(\omega_+R)\\+\frac12\!\!\!\!\!\!\sum_{m>-|\Ebar|/\Omega}\!\!\!\!n(\beta\omega_-)J^2_{|m|}(\omega_-R)\\-\frac{R^2}{\beta a^2}\wt{\kappa}\delta(|\Ebar|-\Omega)+o(a^{-2})\bigg]\,,
\end{multline}
where $\omega_\pm=m\Omega\mp\Eabs$, and 
$\wt{\kappa}$ is a numerical constant, given by 
\begin{align}
\nonumber
    \wt{\kappa}~=~\frac12\bigg\{&\sum_{n=1}^\infty\bigg(\frac{\pi^2}{2}(n+\tfrac14)-\frac{1}{J^2_2(q_{1n})}\bigg)\\&+\frac{(23\pi^2-36)}{192}\bigg\}\,,
\label{eq:kappa 2}
\end{align} 
with the approximate value $\wt{\kappa}\approx 0.5$. 
From \eqref{eq:G beta def} we then have
\begin{align}\nonumber
    \Delta\FF_\beta(\Ebar,a)~=~&\frac{\Ebar^2}{2}\!\!\!\!\sum_{m>\Ebar/\Omega}\!\!\!n(\beta\omega_+)J^2_{|m|}(\omega_+R)\\
    \nonumber+&\frac{\Ebar^2}{2}\!\!\!\!\!\!\sum_{m>-\Ebar/\Omega}\!\!\!\!\!n(\beta\omega_-)J^2_{|m|}(\omega_-R)
    \\-&\frac{\Ebar^2R^2}{\beta a^2}\wt{\kappa}\delta(\Eabs-\Omega)+o(a^{-2})\,,\label{eq:F beta expand}
\end{align}where the $o$-notation is again understood in the distributional sense. 

The leading, $a$-independent term in~\eqref{eq:F beta expand} is the mode-sum representation of the thermal correction to the response function in a thermal state in full $2+1$ Minkowski spacetime~\cite{BunneyThermal}. This was again to be expected. 

The subleading term in~\eqref{eq:F beta expand} consists again of resonance delta-peaks at $\Ebar = \pm\Omega$.  Its $a^{-2}$ falloff is however slower than the $a^{-3}$ falloff of the vacuum contribution subleading term. The detector is hence more sensitive to the boundary when an ambient temperature is present.

\subsection{Static detector}\label{sec: staticdetector}

As a consistency check, we consider briefly the static-observer limit, $\Omega\to0$ with $R$ fixed. 

In the vacuum contribution~\eqref{eq:F vac expand}, the $\Omega\to0$ limit of the leading term becomes the Minkowski vacuum response for a static detector, equal to $\frac12 \Ebar^2 \Theta(-\Ebar)$, as can be verified from \eqref{eq:F vac expand} using Neumann's addition formula $\sum_{m\in\ZZ}J^2_{|m|}(x)=1$~\cite[(10.23.3)]{NIST}.
The $\Omega\to0$ limit of the subleading term is proportional to~$\Ebar^2 \delta(\Ebar)$, which vanishes since $\Ebar\neq0$ by assumption. Assuming that the $\Omega\to0$ limit in the expansion can be taken order by order, we have 
\begin{equation}\label{eq:F stat vac}
    \lim_{\Omega\to0}\FF_{\infty}(\Ebar,a)~=~\frac{\Ebar^2}{2}\Theta(-\Ebar)+o(a^{-3})\,.
\end{equation}

In the finite temperature correction~\eqref{eq:F beta expand}, proceeding similarly shows that the $\Omega\to0$ limit of the leading term becomes the thermal correction in full Minkowski space, 
equal to $\frac12 \Ebar^2 n(\beta\Eabs)$~\cite{BunneyThermal}, and the subleading term vanishes. 
Assuming again that the $\Omega\to0$ limit can be taken order by order, we have
\begin{equation}
    \lim_{\Omega\to0}\!\Delta\FF_\beta(\Ebar,a)
    ~=~\frac{\Ebar^2}{2} n(\beta\Eabs)+o(a^{-2})\,.\label{eq:F stat beta}
\end{equation}

Combining~\eqref{eq:F stat vac} and~\eqref{eq:F stat beta}, a static observer in a thermal bath registers the response function
\begin{equation}\label{eq:F stat}
    \FF_{\mrm{static}}(\Ebar,a,\beta)~=~\frac{\Ebar^2}{2}\Theta(-\Ebar)+\frac{\Ebar^2}{2}n(\beta\Eabs)+o(a^{-2})\,.
\end{equation}
We shall not attempt to analyse the finite size correction beyond the bound shown in~\eqref{eq:F stat}, but we note from this bound that a static detector is less sensitive to the presence of the boundary than a detector in circular motion. 
We may also note from \eqref{eq:F vac KG}
and 
\eqref{eq:F beta vac KG} that $\FF_{\mrm{static}}(\Ebar,a,\beta)$ is a sum of Dirac deltas at $\Ebar= \pm \qmn/a$, at the eigenfrequencies of the cavity, and this sum satisfies for every $a$ the detailed balance condition at temperature $1/\beta$, 
\begin{equation}
    \FF_{\mrm{static}}(-\Ebar,a,\beta)~=~\ee^{\beta\Ebar}\FF_{\mrm{static}}(\Ebar,a,\beta)\,.
\end{equation}
These properties of $\FF_{\mrm{static}}$ follow directly from the staticity of the cavity-detector system.

\subsection{Comparison with a detector in anti-de Sitter spacetime}\label{sec: adS comparison}

As another consistency check, we compare the results to those for a detector in circular motion in $(2+1)$-dimensional anti-de Sitter (AdS) spacetime~\cite{Bunneydsads}, where the negative cosmological constant provides a notion of spatial confinement: 
a cosmological constant $\Lambda<0$ corresponds to confinement at the length scale~$1/\sqrt{-\Lambda}$, which scale is thus analogous to our cavity radius~$a$, 
and the limit $a\to\infty$ corresponds to the limit $\Lambda \to 0$. 

In the $\Lambda \to 0$ limit, the response in AdS spacetime reduces to the response in Minkowski spacetime. The leading small-$\Lambda$ corrections were found in~\cite{Bunneydsads}. 
In the AdS counterpart of our vacuum contribution~\eqref{eq:F vac expand}, 
the leading corrections appear in order~${(-\Lambda)}^{3/2}$, and they consist of resonance excitation and de-excitation peaks similar to those in~\eqref{eq:F vac expand}, plus additional curvature corrections. For the vacuum contribution, there is thus a close correspondence between the $a^{-3}$ correction in \eqref{eq:F vac expand} and the AdS ${(-\Lambda)}^{3/2}$ correction. In the AdS counterpart of our thermal correction~\eqref{eq:F beta expand}, 
the leading small-$\Lambda$ correction contains excitation and de-excitation resonance peaks, as in~\eqref{eq:F beta expand}, but these corrections appear in order $\Lambda\ln(-\Lambda)$, which is larger than the order $a^{-2}$ in~\eqref{eq:F beta expand}. 
This indicates that the interplay between 
thermal states and finite size is more subtle with confinement by a negative cosmological constant than with confinement by a boundary in Minkowski spacetime.  We leave further investigation of these subtleties to future work.

\section{Conclusions}\label{sec:conc}
Motivated by proposals in condensed-matter systems to observe the analogue-spacetime circular motion Unruh effect \cite{BEC1,Marino:2020uqj,Gooding,BunneySounds,Gooding:2025tfp}, we addressed the effects of thermality and spatial confinement on a pointlike two-level quantum system undergoing uniform circular motion in ($2+1$)-dimensional analogue Minkowski spacetime with a circular boundary. 

We modelled the pointlike quantum system as an Unruh-DeWitt detector that is coupled linearly to the time derivative of a quantised massless scalar field. The dynamics and the energy gap of the detector were parametrised in terms of the Minkowski time, rather than the detector’s proper time. This setting is expected to more closely connect to the quantities measured in an analogue spacetime experiment~\cite{Gooding,BunneySounds,Gooding:2025tfp}. 
We worked in the limit of weak interaction within first-order perturbation theory and long interaction time, and we neglected the back-action of the detector on the field.

We investigated the large-boundary limit, recovering to leading order the response function of a UDW detector undergoing uniform circular motion in spatially infinite $2+1$ Minkowski spacetime. At subleading order, we found resonance delta-peaks located at the detector's rotational frequency: these peaks coincide with the discontinuities in the first derivative of the response function of a detector undergoing uniform circular motion in $2+1$ Minkowski spacetime~\cite{Biermann,BunneyThermal}. The overall effect of these resonance peaks is a decrease in the detector response, physically corresponding to the reduced density of states expected from a finite-size system. 

These results bear a similarity to the rotating detector's response in anti-de Sitter (AdS) spacetime~\cite{Bunneydsads}, where the negative cosmological constant provides a notion of spatial confinement and the role of the boundary is played by the asymptotically AdS infinity. By comparison with~\eqref{eq:F vac expand}, one may identify whether a contribution is due to spatial curvature or confinement. As such, the techniques developed in this paper have applications not just in flat spacetimes with boundaries but also in curved spacetimes.

As a mathematical side outcome, we found an asymptotic expansion for the Bessel-function normalisation factor \eqref{eq: new asymptotic} in terms of the McMahon expansion of the large zeros of Bessel functions~\cite{NIST}. 
We have not encountered this expansion in the existing literature.

As a mathematical curiosity, we encountered in the analysis two numerical constants, $\kappa \approx 0.2$ \eqref{eq: kappa} and 
$\wt{\kappa} \approx 0.5$ \eqref{eq:kappa 2}, each given by a sum involving zeroes of the Bessel function~$J_1$. 
Our numerical experiments are consistent with the hypothesis that $\kappa$ might equal exactly $\frac15$ and $\wt{\kappa}$ might equal exactly~$\frac12$. We have not attempted to examine this hypothesis analytically. 

We recall that our motivation to consider both a finite size and a finite ambient temperature came from the proposals to simulate a UDW detector in circular motion in a genuinely relativistic spacetime, which use a laser to probe quasiparticles in a condensed-matter system, such as phonons in a Bose-Einstein condensate or third-sound waves in superfluid helium~\cite{Gooding,BunneySounds}. The dynamics of perturbations in these systems simulate a scalar field in a relativistic spacetime when the detector energy gap is sufficiently low and dispersive effects may be neglected. However, these systems can never simulate a field at zero temperature in a spatially infinite spacetime. 

A further experimental consideration is the effect of a finite interaction time and the potential transient effects from the beginning and end of the interaction. In this regard, equation~\eqref{eq:finite time response} is likely to model the true experience of a detector probing these systems. We leave the detailed investigation of finite-time effects under spatial confinement and ambient temperature to future work.

\section*{Acknowledgements}

We thank Gerg\H{o} Nemes (Tokyo Metropolitan University) for discussions on asymptotics of the zeros of Bessel functions and for providing the proof of the asymptotic expansion~\eqref{eq: new asymptotic}, Leo J A Parry and Leonardo Solidoro for helpful comments on an early version of the manuscript, and Silke Weinfurtner and other members of the Nottingham Gravity
Laboratory for numerous helpful interactions. The work of CRDB was supported by United Kingdom Research and Innovation Engineering and Physical Sciences Research Council [EP/W524402/1]. The work of JL was supported by United Kingdom Research and Innovation Science and Technology Facilities Council [grant numbers ST/S002227/1, ST/T006900/1 and ST/Y004523/1].
For the purpose of open access, the authors have applied a CC BY public copyright licence to any Author Accepted Manuscript version arising.

\appendix
\onecolumngrid

\section{Zeros of $J_m$: auxiliary results}\label{app: asymptotic expansion}

In this appendix, we record a version of the McMahon expansion \cite[(10.21.19)]{NIST} for the large zeros of $J_m$ and use this expansion to establish Proposition~\ref{propo:nemes}, 
which will be used in the expansions in Appendices \ref{app: large a vac} and~\ref{app:large a therm}\null. 

Let $m \in \{0,1,2,\ldots\}$, and let $q_{mn}$ denote the positive zeros of the Bessel function $J_m$, $n \in \{1,2,\ldots\}$. 
The McMahon expansion \cite[(10.21.19)]{NIST} is an asymptotic expansion of $q_{mn}$ as $n\to\infty$ with fixed~$m$, given by 
\begin{equation}\label{eqn:m-McMahon expand}
        q_{mn}~\sim~X_{m}(\pi \alpha_{mn})\,,
\end{equation}
where
\begin{subequations}
\label{eqn:alpha-mn}
    \begin{align}
        \alpha_{mn}~\coloneq&~n+\frac12 m-\frac14\,,
        \label{eqn:alphamn-def}\\
        X_{m}(z)~\coloneq&~
        z + \sum_{k=1}^\infty\frac{c_k(m)}{z^{2k-1}}~=~ z -\frac{4m^2-1}{8z}
        +O(z^{-3})\,,
        \label{eqn:X-mz}
    \end{align}
\end{subequations}
the equals sign in \eqref{eqn:X-mz} holds in the sense of the asymptotic expansion~\eqref{eqn:m-McMahon expand}, and the coefficients $c_k(m)$ are even polynomials of degree $2k$ in~$m$. 
$c_1(m)$~is shown in~\eqref{eqn:X-mz}, 
$c_2(m)$, $c_3(m)$ and $c_4(m)$ are given in \cite[(10.21.19)]{NIST}, 
and the higher $c_k(m)$ are obtained from the recurrence relation given in~\cite{NemesZeros}.

\begin{propo}\label{propo:nemes}
For fixed $m \in \{0,1,2,\ldots\}$ and $n\to\infty$, we have 
\begin{equation}\label{eqn:denom expand}
    \frac{1}{q_{mn} J_{m+1}^2(q_{mn})}~\sim~\frac{\pi}{2}X'_{m}(\pi\alpha_{mn}) \,,
\end{equation}
where $\alpha_{mn}$ is given in \eqref{eqn:alphamn-def} and 
$X'_m$ is obtained by differentiating \eqref{eqn:X-mz} term by term, 
\begin{align}
X_{m}'(z)~=&~1 -\sum_{k=1}^\infty\frac{(2k-1) c_k(m)}{z^{2k}}\,, 
\label{eqn:X-mz-prime}
\end{align}
where the equals sign holds in the sense of the asymptotic expansion~\eqref{eqn:denom expand}. 
\end{propo}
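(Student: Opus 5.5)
The plan is to use the modulus–phase representation of Bessel functions \cite[\S10.18]{NIST}. It converts both the zeros $q_{mn}$ and the normalisation factor $1/(q_{mn}J^2_{m+1}(q_{mn}))$ into statements about one monotone function, the phase $\theta_m(x)$. Proposition~\ref{propo:nemes} should then reduce to the rule for differentiating an inverse function.

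\emph{Step 1 (exact identity).} For $x>0$ write $J_m(x)=M_m(x)\cos\theta_m(x)$ and $Y_m(x)=M_m(x)\sin\theta_m(x)$, with $M_m>0$ and $\theta_m$ continuous and increasing. The Wronskian gives $\theta_m'(x)=2/(\pi x M_m^2(x))$. At a zero $q=q_{mn}$ we have $\cos\theta_m(q)=0$, so $J_m'(q)=-M_m(q)\theta_m'(q)\sin\theta_m(q)$ and $J_m'(q)^2=M_m^2\theta_m'^2$. Since $J_{m+1}(q)=-J_m'(q)$ at a zero of $J_m$, we get
\begin{equation*}
q\,J^2_{m+1}(q)~=~q\,M_m^2(q)\,\theta_m'(q)^2~=~\frac{2}{\pi}\,\theta_m'(q)\,,
\qquad\text{hence}\qquad
\frac{1}{q_{mn}J^2_{m+1}(q_{mn})}~=~\frac{\pi}{2}\,\frac{1}{\theta_m'(q_{mn})}\,.
\end{equation*}
This identity holds exactly, for every $n$.

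\emph{Step 2 (McMahon as inversion of the phase).} With the standard branch, $\theta_m(x)= x-(\tfrac12 m+\tfrac14)\pi+o(1)$ as $x\to\infty$. The zeros satisfy $\theta_m(q_{mn})=(n-\tfrac12)\pi$, which we will confirm by counting zeros, using that $\theta_m$ is increasing. Define $\psi_m(x)=\theta_m(x)+(\tfrac12 m+\tfrac14)\pi$. Then $\psi_m(q_{mn})=\pi\alpha_{mn}$, so $q_{mn}=\psi_m^{-1}(\pi\alpha_{mn})$. The McMahon expansion \eqref{eqn:m-McMahon expand} is precisely the asymptotic expansion of $\psi_m^{-1}(z)$ as $z\to\infty$, namely $\psi_m^{-1}(z)\sim X_m(z)$. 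This follows by formally inverting the known large-$x$ expansion of $\theta_m(x)$ \cite[(10.18.18)]{NIST}, $\psi_m(x)\sim x+\sum_k d_k(m)x^{1-2k}$, and it reproduces the coefficients $c_k(m)$.

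\emph{Step 3 (differentiate).} By the inverse function rule, $(\psi_m^{-1})'(z)=1/\theta_m'(\psi_m^{-1}(z))$. Combining with Step 1 at $z=\pi\alpha_{mn}$ gives
\begin{equation*}
\frac{1}{q_{mn}J^2_{m+1}(q_{mn})}~=~\frac{\pi}{2}\,(\psi_m^{-1})'(\pi\alpha_{mn})\,.
\end{equation*}
It remains to show that $(\psi_m^{-1})'(z)\sim X_m'(z)$, that is, that the McMahon expansion may be differentiated term by term.

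This last step is the main obstacle, since asymptotic expansions cannot be differentiated in general. I would handle it without differentiating an expansion. The expansion of $\theta_m'(x)=2/(\pi xM_m^2(x))$ is available independently from the asymptotic expansion of $M_m^2(x)$ \cite[(10.18.17)]{NIST}. That expansion is a power series in $x^{-2}$, which can be reciprocated, and it coincides with the term-by-term derivative of the expansion of $\theta_m$; both come from the Hankel expansions \cite[\S10.17]{NIST}. I would then substitute $x=\psi_m^{-1}(z)\sim X_m(z)$ into the expansion of $1/\theta_m'(x)$ and re-expand in powers of $z^{-1}$. Composition of asymptotic power series is legitimate, and the result is the expansion of $(\psi_m^{-1})'(z)$. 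Finally, formal power-series algebra shows that this series equals the formal derivative of the inverse series $X_m$, because the identity $(f^{-1})'=1/(f'\circ f^{-1})$ holds at the level of formal series. This gives \eqref{eqn:X-mz-prime} and completes the proof.
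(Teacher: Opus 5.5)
Your proposal is correct and is essentially the paper's proof. Your $\psi_m^{-1}\bigl(\pi(t+\tfrac12 m-\tfrac14)\bigr)$ is exactly the paper's $\rho_m(t)$, and your Wronskian identity $1/(q_{mn}J_{m+1}^2(q_{mn}))=\tfrac{\pi}{2}/\theta_m'(q_{mn})$ is the content of the DLMF relation (10.21.10) that the paper uses to get $1/(q_{mn}J_{m+1}^2(q_{mn}))=\tfrac12\rho_m'(n)$.

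The one real difference is that you explicitly justify differentiating the McMahon expansion term by term, via the independent expansion of $M_m^2$, composition of asymptotic series, and the formal inverse-function rule. The paper's proof uses that step tacitly when it passes from the expansion of $\rho_m(t)$ to that of $\rho_m'(t)$.
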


We have not encountered Proposition \ref{propo:nemes} 
in the existing literature. 
The following proof was provided by Gerg\H{o} Nemes (Tokyo Metropolitan University)~\cite{Nemes-private-2024}.
\begin{proof}
In the notation of \cite[$\S$10.21(ii)]{NIST}, let 
$C_{m}(z,t)\coloneq J_{m}(z)\cos(\pi t)+Y_{m}(z)\sin(\pi t)$, 
where 
$Y_{m}$ are the Bessel functions of the second kind, 
and we have explicitly indicated the dependence of $C_{m}(z,t)$ on the parameter $t\ge0$. 
Then, there exists a continuous function $\rho_m(t)$ that satisfies $\rho_m(0)=0$, $C_{m}(\rho_m(t),t) = 0$ for $t>0$, and $\rho_m(n) = q_{mn}$, $n \in \{1,2,\ldots\}$. 

Using the recurrence relation~\cite[(10.6.2)]{NIST}, we have $J_{m}'(q_{mn})=-J_{m+1}(q_{mn})$. As such, we may write
\begin{equation}\label{eqn: jm qmn rho n}
    q_{mn} J^2_{m+1}(q_{mn})~=~\lrl{\lrb{\rho_{m}(t)\lr{J'_{m}(\rho_{m}(t))}^2}}{t=n}\,.
\end{equation}By~\cite[(10.21.10)]{NIST}, we have
\begin{equation}\label{eqn: jm qmn rho}
    J'_{m}(\rho_{m}(t))~=~{\lrb{\frac{1}{2}\rho_{m}(t)\td{\rho_{m}(t)}{t}}}^{-\frac{1}{2}}\,.
\end{equation}
Hence 
\begin{equation}\label{eqn:exact equal}
    \frac{1}{q_{mn} J^2_{m+1}(q_{mn})}~=~\lrl{\frac{1}{\rho_{m}(t)\lr{J_{m}'\lr{\rho_{m}(t)}}^2}}{t=n}~=~\lrl{\frac{1}{2}\td{\rho_{m}(t)}{t}}{t=n}\,,
\end{equation}
where the first equality uses \eqref{eqn: jm qmn rho n} and the second equality uses~\eqref{eqn: jm qmn rho}. 

The large $t$ expansion of $\rho_{m}(t)$ is obtained from \eqref{eqn:m-McMahon expand} and \eqref{eqn:alpha-mn} by replacing $\alpha_{mn}$ by 
$t+\frac12m-\frac14$ \cite[comment below (10.21.19)]{NIST}. Combining this with~\eqref{eqn:exact equal}, 
\eqref{eqn:denom expand} follows. 
\end{proof}

Note that the series for 
$X_{m}$ \eqref{eqn:X-mz}
and 
$X'_{m}$
\eqref{eqn:X-mz-prime}
hold in the sense of the asymptotic expansions 
\eqref{eqn:m-McMahon expand}
and 
\eqref{eqn:denom expand}
but are not convergent. 
Let ${\hat X}_{m}$
and 
${\hat X}'_{m}$ be the truncations of these series to 
$1 \le k \le k_\text{tr}$, where the notation suppresses the value of 
$k_\text{tr} \in \{1,2,\ldots\}$. 
${\hat X}_{m}$
and 
${\hat X}'_{m}$ with sufficiently high $k_\text{tr}$ will be used in Appendices \ref{app: large a vac} and~\ref{app:large a therm}\null. 

\section{Large-$a$ asymptotics: vacuum contribution}\label{app: large a vac}

In this appendix, we find the asymptotic behaviour of the vacuum integrated response function \eqref{eq:G vac a} in the large-$a$ regime, leading to the results of Section~\ref{sec: large a vac sec}.

\subsection{Decomposition of integrated response function}
The integrated response function \eqref{eq:G vac a} reads
\begin{equation}\label{eqn:G vac a}
    \GG(a)~=~\frac{1}{a}\sum_{m,n}\frac{J^2_{|m|}\lr{\frac{R}{a}\qmn}}{\qmn J_{|m|+1}^2(\qmn)}\sigma(m\Omega-\qmn/a)\,,
\end{equation}
where the sum is over $m\in\ZZ$ and $n\in\NN$. 

We recall that $\sigma\in C_0^\infty(\RR)$. Let $\sigma_I\coloneq \inf\supp(\sigma)$ and $\sigma_S\coloneq\sup\supp(\sigma)$. We assume that the support of $\sigma$ is chosen such that either $\supp(\sigma)\subset\RR_{>0}$ with $0<\sigma_I<\Omega<\sigma_S$ or $\supp(\sigma)\subset\RR_{<0}$ with $\sigma_I<-\Omega<\sigma_S<0$.

For fixed $m$ in~\eqref{eqn:G vac a}, the values of $n$ in the sum over $n$ are restricted by
\begin{equation}
    \sigma_I~<~m\Omega-\frac{\qmn}{a}~<~\sigma_S\,,
\label{eq:qmn-rawbounds}
\end{equation}
which may be rewritten as
\begin{equation}\label{eqn: qmn inequal}
m-\frac{\sigma_S}{\Omega}
~<~\frac{\qmn}{a\Omega}~<~m -\frac{\sigma_I}{\Omega}\,.
\end{equation} 
Depending on the value of~$m$, the condition~\eqref{eqn: qmn inequal} falls into three different cases. For sufficiently large~$a$, with
the other parameters fixed, these cases are as follows.
\begin{itemize}
        \item[-] For $m \leq \frac{\sigma_I}{\Omega}$, no $n$ satisfy~\eqref{eqn: qmn inequal}.
        \item[-] For $\frac{\sigma_I}{\Omega}<m\leq\frac{\sigma_S}{\Omega}$, $n$ satisfies $1\leq n\leq\Nmax$. 
        We denote the set of these $m$ by $\mc{C}^{1}$. 
        \item[-] For $\frac{\sigma_S}{\Omega} < m$, $n$ satisfies $\Nmin\leq n\leq\Nmax$. 
        We denote the set of these $m$ by $\mc{C}^{2}$. 
    \end{itemize}
Here $\Nmin$, $\Nmax$ are respectively the least and greatest values of $n$ satisfying~\eqref{eqn: qmn inequal}, and the notation suppresses their dependence on $a$ and~$m$. 
Note that $\mc{C}^1$ is a finite set, and it contains $1$ when 
$0<\sigma_I<\Omega<\sigma_S$ 
and $-1$ when $\sigma_I<-\Omega<\sigma_S<0$. 

With this notation, we may split $\GG$ as
\begin{subequations}
\label{eq:G-split}
    \begin{align}
        \GG(a)&~=~\GG^1(a)+\GG^2(a)\,,\\
        \GG^1(a)&~=~\frac{1}{a}\sum_{m\in\mc{C}^1}\sum_{n=1}^{\Nmax}\frac{J^2_{|m|}\lr{\frac{R}{a}\qmn}}{\qmn J_{|m|+1}^2(\qmn)}\sigma(m\Omega-\qmn/a)\,,\label{eqn:G1 a}\\
        \GG^2(a)&~=~\frac{1}{a}\sum_{m\in\mc{C}^2}\sum_{n=\Nmin}^{\Nmax}\frac{J^2_{|m|}\lr{\frac{R}{a}\qmn}}{\qmn J_{|m|+1}^2(\qmn)}\sigma(m\Omega-\qmn/a)\,.\label{eqn:G2 a}
    \end{align}
\end{subequations}

\subsection{$\GG^{1}$}
\label{appsec:GG1}

We consider first $\GG^1$~\eqref{eqn:G1 a}. To further decompose the sum over~$n$, we fix a constant $p\in(0,\tfrac14)$ and set $N\coloneq\lfloor\lr{\frac aR}^p\rfloor$, where $\lfloor\cdot\rfloor$ is the floor function~\cite{NIST}. For sufficiently large $a$, we then have $N<\Nmax$, and we may write
\begin{subequations}
    \begin{align}
        \GG^1(a)&~=~\GG_<^1(a)+\GG_>^1(a)\,\\\label{eqn:g1 less}
        \GG_<^1(a)&~=~\frac{1}{a}\sum_{m\in\mc{C}^1}\sum_{n=1}^{N-1}\frac{J^2_{|m|}\lr{\frac{R}{a}\qmn}}{\qmn J_{|m|+1}^2(\qmn)}\sigma(m\Omega-\qmn/a)\,,\\\label{eqn:g1 greater}
        \GG_>^1(a)&~=~\frac{1}{a}\sum_{m\in\mc{C}^1}\sum_{n=N}^{\Nmax}\frac{J^2_{|m|}\lr{\frac{R}{a}\qmn}}{\qmn J_{|m|+1}^2(\qmn)}\sigma(m\Omega-\qmn/a)\,.
    \end{align}
\end{subequations}

To address $\GG_<^1$~\eqref{eqn:g1 less}, we recall that $\mc{C}^1$ has finitely many elements and does not contain $m=0$, and as $n<N$ in~\eqref{eqn:g1 less}, the McMahon expansion~\eqref{eqn:m-McMahon expand} with $m$ replaced by $|m|$ informs us that $\qmn\sim \pi n+O(1)$, which is at most $q_{|m|N}=O(a^p)$. As such for $n<N$, we have $\qmn/a=o(1)$ as $a\to\infty$. Then, elementary estimates, using the $m\ne0$ Maclaurin expansion~\cite{NIST}
\begin{equation}\label{eqn:small Bessel}
    J^2_{|m|}(z)~\sim~\frac14 z^2\delta_{|m|1}+O(z^4)\,,
\end{equation}as $z\to0$, and recalling $0<p<\tfrac14$, show
\begin{equation}\label{eqn:G1 minus}
    \GG^1_<(a)~=~\frac{R^2}{4a}\sum_{m\in\mc{C}^1}\sum_{n=1}^{N-1}\frac{1}{q_{1n}J_2^2(q_{1n})}\delta_{1|m|}\lr{\lr{\frac{q_{1n}}{a}}^2\sigma(m\Omega)-\lr{\frac{q_{1n}}{a}}^3\sigma'(m\Omega)}+o(a^{-3})\,,
\end{equation}where $\sigma'(m\Omega)=\frac{\dd}{\dd z}\sigma(z)|_{z=m\Omega}$. 

By Proposition~\ref{propo:nemes}, the factor $1/(q_{1n}J_2^2(q_{1n}))$ in \eqref{eqn:G1 minus} is bounded, and by the McMahon expansion~\eqref{eqn:m-McMahon expand}, $q_{1n}$ is bounded by a multiple of~$n$. The second term under the sum in~\eqref{eqn:G1 minus} is hence bounded by a multiple of~$n^3/a^3$. Summing over~$n$, the contribution from the second term is hence bounded by a multiple of $N^4/a^4$, which is $o(a^{-3})$ since $N = O(a^p)$ and $0<p<\tfrac14$. We hence have 
\begin{equation}\label{eqn:G1 minus-improved}
    \GG^1_<(a)~=~\frac{R^2}{4a^3}\sum_{m\in\mc{C}^1}\sigma(m\Omega)\delta_{1|m|}\sum_{n=1}^{N-1}\frac{q_{1n}^2}{q_{1n}J_2^2(q_{1n})}+o(a^{-3})\,.
\end{equation}

To address $\GG_>^1$~\eqref{eqn:g1 greater}, 
we recall again that $\mc{C}^1$ has finitely many elements and does not contain $m=0$, 
and we note that $n$ may be considered large in each term in the sum. We may therefore again employ the McMahon expansion \eqref{eqn:m-McMahon expand} for~$\qmn$. 
Using Proposition~\ref{propo:nemes} for the denominator in~\eqref{eqn:g1 greater}, 
and recalling that the McMahon expansion~\eqref{eqn:m-McMahon expand} proceeds in inverse integer powers of~$n$, 
we may write $\GG_>^1$ \eqref{eqn:g1 greater} as 
\begin{equation}\label{eqn:G1 max expand}
    \GG^1_>(a)~=~\frac{\pi}{2a}\sum_{m\in\mc{C}^1}\sum_{n=N}^{\Nmax}J^2_{|m|}\lr{\tfrac Ra {\hat X}_{|m|}(\pi\amn)}{\hat X}'_{|m|}(\pi\amn)\sigma\lr{m\Omega-\tfrac1a {\hat X}_{|m|}(\pi\amn)}
    + {\hat O}^{\infty}(a^{-1})\,, 
\end{equation}
where ${\hat X}$ and ${\hat X}'$ are the truncations of the asymptotic series 
\eqref{eqn:X-mz}
and 
\eqref{eqn:X-mz-prime}
as defined in Appendix~\ref{app: asymptotic expansion}, with the order of truncation suppressed but taken sufficiently high for the steps that follow. The error term ${\hat O}^{\infty}(a^{-1})$ in \eqref{eqn:G1 max expand} can be made to vanish faster than any given inverse power of $a$ as $a\to\infty$ by taking the order of truncation sufficiently high. 

To estimate the sum in~\eqref{eqn:G1 max expand}, we employ the Euler-Maclaurin formula~\cite{Kacbook},
\begin{equation}\label{eqn:Euler Mac}
    \sum_{n=p}^{q}g(n)~=~\int_{p}^q\dd x\,g(x)+\frac12\lr{g(p)+g(q)}+\sum_{i=2}^l\frac{B_i}{i!}\lr{g^{(i-1)}(q)-g^{(i-1)}(p)}+\int_p^q\dd x\,\frac{\wt{B}_l(1-x)}{l!}g^{(l)}(x)\,,
\end{equation}where $l\ge2$ is an integer, $B_i$ are the Bernoulli numbers, and $\wt{B}_l$ are the periodic Bernoulli polynomials~\cite{NIST}.
We first observe that for sufficiently large~$a$, we may replace $\Nmax$ in \eqref{eqn:G1 max expand} by $\Nmax+1$ without changing the value of the sum because the new term is outside the support of~$\sigma$. Applying \eqref{eqn:Euler Mac} with $l=3$, and recalling that $B_2 = \tfrac16$ and $B_3=0$, then obtain
\begin{subequations}
\label{eqn:I-collected}
\begin{align}
    \GG_>^1(a)&~=~I_1+I_2+I_3+I_4 + {\hat O}^{\infty}(a^{-1})\,,\\\label{eqn:I1}
    I_1&~=~\frac{\pi}{2a}\sum_{m\in\mc{C}^1}\int_{N}^{\infty}\dd x\,J^2_{|m|}\lr{\tfrac Ra {\hat X}_{|m|}(\pi\amx)}{\hat X}'_{|m|}(\pi\amx) \sigma\lr{m\Omega-\tfrac1a {\hat X}_{|m|}(\pi\amx)}\,,\\\label{eqn:I2}
    I_2&~=~\frac{\pi}{2a}\sum_{m\in\mc{C}^1}J^2_{|m|}\lr{\tfrac Ra {\hat X}_{|m|}(\pi\amN)} {\hat X}'_{|m|}(\pi\amN)\sigma\lr{m\Omega-\tfrac1a {\hat X}_{|m|}(\pi\amN)}\,,\\\label{eqn:I3}
    I_3&~=~-\frac{\pi}{24a}\sum_{m\in\mc{C}^1}\td{}{x}\bigg(J^2_{|m|}\lr{\tfrac Ra {\hat X}_{|m|}(\pi\amx)} {\hat X}'_{|m|}(\pi\amx)\lrl{\sigma\lr{m\Omega-\tfrac1a {\hat X}_{|m|}(\pi\amx)}\bigg)}{x=N}\,,\\\label{eqn:I4}
    I_4&~=~\frac{\pi}{12a}\sum_{m\in\mathcal{C}^1}\int_{N}^{\infty}\dd x\,\wt{B}_3(1-x)\tdn{}{x}{3}\bigg(J^2_{|m|}\lr{\tfrac Ra {\hat X}_{|m|}(\pi\amx)} {\hat X}'_{|m|}(\pi\amx)\sigma\lr{m\Omega-\tfrac1a {\hat X}_{|m|}(\pi\amx)}\bigg)\,,
\end{align}
\end{subequations}
where $\amx=x+\tfrac12|m|-\tfrac14$, 
the $q=\Nmax+1$ substitution terms in~\eqref{eqn:Euler Mac} have not contributed because of the support of~$\sigma$, and in $I_1$ \eqref{eqn:I1} and $I_4$ \eqref{eqn:I4} we have extended the upper limit of integration to infinity, again because of the support of~$\sigma$. 

For $I_2$ \eqref{eqn:I2} and $I_3$~\eqref{eqn:I3}, we use 
$N/a=o(1)$ to perform a Maclaurin expansion, recalling that $m\ne0$, and we truncate the results at order $o(a^{-3})$. We find 
\begin{subequations}\label{eqn:I23bound}
    \begin{align}
        I_2~=~&\frac{R^2}{16a^3}\sum_{m\in\mc{C}^1}\lr{\pi^3\lr{N+\tfrac14}^2-\tfrac38\pi}\sigma(m\Omega)\delta_{|m|1}+o(a^{-3})\,,\\
        I_3~=~&-\frac{R^2}{48a^3}\sum_{m\in\mc{C}^1}\pi^3\lr{N+\tfrac14}\sigma(m\Omega)\delta_{|m|1}+o(a^{-3})\,.
    \end{align}
\end{subequations}

For $I_1$~\eqref{eqn:I1}, we write $I_1 = I_0 + \Delta I_1$, where $I_0$ is as in \eqref{eqn:I1} but integrated from $0$ to~$\infty$, and the correction $\Delta I_1$ is integrated from $0$ to $N$ and has an overall minus sign. In~$I_0$, we perform the change of variables $z={\hat X}_{|m|}(\pi\amx)/a$. In~$\Delta I_1$, we use again 
$N/a=o(1)$ to perform a Maclaurin expansion, recalling $m\ne0$, and perform the resulting integrals, truncating at order $o(a^{-3})$. We find 
\begin{subequations}\label{eqn:I1bound}
    \begin{align}
        I_1~=~&I_0+\Delta I_1\,,\\
        I_0~=~&\sum_{m\in\mc{C}^1}\int_0^\infty\dd z\,\frac12J_{|m|}^2(Rz)\sigma(m\Omega- z)\,,\\
        \Delta I_1~=~&-\frac{R^2}{24 a^3}\sum_{m\in\mc{C}^1}\lr{\pi^3\lr{N+\tfrac14}^3-\tfrac98\pi\lr{N+\tfrac14}}\sigma(m\Omega)\delta_{|m|1}+o(a^{-3})\,.
    \end{align}
\end{subequations}

For $I_4$~\eqref{eqn:I4}, we perform the change of variables $z=(\pi/a)\amx = (\pi/a)(x+\tfrac12|m|-\tfrac14)$, and recall from \eqref{eqn:X-mz} 
that since $N$ is large, 
we have ${\hat X}_{|m|}(\pi\amx)=\pi\amx+ O(\amx^{-1}) \sim az$ and ${\hat X}'_{|m|}(\pi\amx)\sim 1$. We hence have 
\begin{equation}
    I_4~\sim~\frac{1}{12}\lr{\frac{\pi}{a}}^3\sum_{m\in\mc{C}^1}\int_{0}^\infty\dd z\,\wt{B}_3(\tfrac{a}{\pi}z-\amo)\tdn{}{z}{3}\lr{J^2_{|m|}(Rz)\sigma(m\Omega-z)}\,,
\label{eq:I4-est-N}
\end{equation}
where we have extended the lower limit of the integral from $\frac{\pi}{a} \alpha_{|m| N}$ to $0$ at the expense of an $O(a^{p-4}) = o(a^{-3})$ error. 
As $\tdn{}{z}{3}J_{|m|}^2(Rs)\sigma(m\Omega-z)$ is a bounded function of compact support, a generalisation of the Riemann-Lebesgue lemma that covers the periodic Bernoulli polynomials~\cite[(Theorem 4)]{Riemann} shows that $I_4 = o(a^{-3})$. 

Collecting \eqref{eqn:I23bound} and 
\eqref{eqn:I1bound} and using $I_4 = o(a^{-3})$, we find
\begin{align}\nonumber
    \GG_>^1(a)~=~&\sum_{m\in\mc{C}^1}\int_0^\infty\dd z\,\frac12J_{|m|}^2(Rz)\sigma(m\Omega-z)\\
    &-\frac{R^2}{4a^3}\sum_{m\in\mc{C}^1}\lr{\sum_{n=1}^{N-1}\lrb{\frac{\pi^3}{2}\lr{n+\tfrac14}^2-\frac{3\pi}{16}}+\frac{\pi(5\pi^2-18)}{128}}\sigma(m\Omega)\delta_{|m|1}+o(a^{-3})\,,\label{eqn:G1 plus}
\end{align}where on the last line we have rewritten the cubic polynomial in $N$ as a sum of a quadratic polynomial in~$n$, for reasons that will emerge below in~\eqref{eqn:resum}. 

Combining $\GG^1_<$~\eqref{eqn:G1 minus-improved} and $\GG^1_>$~\eqref{eqn:G1 plus}, we have
\begin{align}
    \GG^1(a)~=~&\sum_{m\in\mc{C}^1}\int_0^\infty\dd z\,\frac12J_{|m|}^2(Rz)\sigma(m\Omega-z)
    \nonumber\\
    &-\frac{R^2}{4a^3}\sum_{m\in\mc{C}^1}\lr{\sum_{n=1}^{N-1}\lrb{\frac{\pi^3}{2}\lr{n+\tfrac14}^2-\frac{3\pi}{16}-\frac{q_{1n}^2}{q_{1n}J_2^2(q_{1n})}}+\frac{\pi(5\pi^2-18)}{128}}\sigma(m\Omega)\delta_{|m|1}
    +\,o(a^{-3})\,.\label{eqn:G1 expand still a}
\end{align}

The expansion \eqref{eqn:G1 expand still a} for $\GG^1(a)$ still contains the auxiliary function $N=\lfloor\lr{\tfrac{a}{R}}^p\rfloor$, with the parameter $p\in(0,\tfrac14)$. To remove the dependence on~$p$, we note that 
\begin{equation}\label{eqn: expansion J frac}
    \frac{q_{1n}^2}{q_{1n}J^2_2(q_{1n})}~=~\frac{\pi^3}{2}\lr{n+\tfrac{1}{4}}^2-\frac{3\pi}{16}+O(n^{-2})\,, 
\end{equation}
using \eqref{eqn:m-McMahon expand} and 
Proposition~\ref{propo:nemes}. In the coefficient of $\sigma(m\Omega)$ in~\eqref{eqn:G1 expand still a}, we may therefore write
\begin{align}
    \sum_{n=1}^{N-1}\lrb{\frac{\pi^3}{2}\lr{n+\tfrac14}^2-\frac{3\pi}{16}-\frac{q_{1n}^2}{q_{1n}J_2^2(q_{1n})}}~=~\sum_{n=1}^{\infty}\lrb{\frac{\pi^3}{2}\lr{n+\tfrac14}^2-\frac{3\pi}{16}-\frac{q_{1n}^2}{q_{1n}J_2^2(q_{1n})}} \ + O(a^{-p})\,.\label{eqn:resum}
\end{align}
The final result for $\GG^1(a)$ is hence 
\begin{align}
    \GG^1(a)~=~&\sum_{m\in\mc{C}^1}\int_0^\infty\dd z\,\frac12J_{|m|}^2(Rz)\sigma(m\Omega-z)\nonumber\\
    &-\frac{R^2}{4a^3}\sum_{m\in\mc{C}^1}\lr{\sum_{n=1}^{\infty}\lrb{\frac{\pi^3}{2}\lr{n+\tfrac14}^2-\frac{3\pi}{16}-\frac{q_{1n}^2}{q_{1n}J_2^2(q_{1n})}}+\frac{\pi(5\pi^2-18)}{128}}\sigma(m\Omega)\delta_{|m|1}
    +\,o(a^{-3})\,,\label{eqn:G1 expand}
\end{align}
showing the $a$-independent leading term and the leading correction, proportional to~$a^{-3}$.

\subsection{$\GG^2$}\label{sec:G a}

We consider next $\GG^2$~\eqref{eqn:G2 a}. As the set $\mc{C}^{2}$ is not bounded above, we will need to use the uniform large argument expansions of the Bessel functions and their zeroes \cite[$\S$10.20]{NIST}. 

To begin, we set $M\coloneq\lfloor(a\Omega)^2\rfloor$, 
and we assume $a$ to be so large that $M\ge1$ and $M > 1 + \sigma_S/\Omega$. We may then decompose the sum over $m$ in \eqref{eqn:G2 a} as 
\begin{subequations}
\label{eqn:G2decomp}
    \begin{align}
        \GG^2(a)&~=~\GG^2_<(a)+\GG^2_>(a)\,,\\\label{eqn:G2min}
        \GG^2_<(a)&~=~\frac{1}{a}\sum_{\substack{m\in\mc{C}^{2}\!\!,\\ m\leq M-1}}\sum_{n=\Nmin}^{\Nmax}\frac{J^2_{|m|}\lr{\frac Ra\qmn}}{\qmn J^2_{|m|+1}(\qmn)}\sigma(m\Omega-\qmn/a)\,,\\
        \GG^2_>(a)&~=~\frac{1}{a}\sum_{m=M}^\infty\sum_{n=\Nmin}^{\Nmax}\frac{J^2_{m}\lr{\frac Raq_{mn}}}{q_{mn} J^2_{m+1}(q_{mn})}\sigma(m\Omega-q_{mn}/a)\,,\label{eqn:G2max}
    \end{align}
\end{subequations}
where we recall that $\Nmin$ and $\Nmax$ depend on $m$ and $a$ as explained below~\eqref{eqn: qmn inequal}. Note that the set of $m$ in \eqref{eqn:G2min} is nonempty because $M > 1 + \sigma_S/\Omega$, and in \eqref{eqn:G2max} we have dropped the absolute values on $m$ because $M\ge1$ and hence $m\ge1$. 

Consider first $\GG^2_>$~\eqref{eqn:G2max}. 
Since $m\ge M \ge1$, the condition \eqref{eqn: qmn inequal} that determines $\Nmin$ and $\Nmax$ can be written as 
\begin{equation}
\label{eqn: qmn inequal alt1}
a \Omega \! \left(
1-\frac{(\sigma_S/\Omega)}{m}
\right)
~<~
\frac{q_{mn}}{m}
~<~
a\Omega \! \left(
1-\frac{(\sigma_I/\Omega)}{m}
\right)\,.
\end{equation} 
As $m\to\infty$, we have from \cite[(10.21.41)]{NIST} the expansion 
\begin{align}
\frac{q_{mn}}{m} = 
z_{mn} \! \left[1 + \frac{1}{24m^2}
\! \left(
\frac{3z_{mn}^2+2}{{(z_{mn}^2-1)}^2} - \frac{5}{3 w_{mn} \sqrt{z_{mn}^2-1}}
\right)
\right]
+ O\!\left(m^{-4}\right)
\,,
\label{qmn/m-asymptotics}
\end{align}
where the $O\!\left(m^{-4}\right)$ term is uniform in~$n$, $z_{mn}>1$ is the solution to 
\begin{align}
w_{mn} &= \sqrt{z_{mn}^2-1} - \arccos\left(\frac{1}{z_{mn}}\right)\,, 
\end{align}  
and 
\begin{align}
w_{mn} &:= \frac{2}{3}\frac{(-a_n)^{3/2}}{m}
\,,
\label{eq:wmn-def}
\end{align}
where 
$a_n$ are the zeroes of the Airy function~$\Ai$, in the notation of \cite[\S9.9]{NIST}. 
From the large $n$ expansion of $a_n$ \cite[(9.9.6)]{NIST} it then follows that as $a\to\infty$, $\Nmin$ and $\Nmax$ are asymptotic to $(a\Omega)m/\pi$. 

To control the denominator in \eqref{eqn:G2max} as $a\to\infty$, we need an expansion that for each $a$ holds for all $m\ge M = \lfloor(a\Omega)^2\rfloor$ and for all $n \in [\Nmin,\Nmax]$. To this end, we write $n = sm + p$, where $s$ and $p$ are integers whose range as $a\to\infty$ is to be determined. 
To implement the lower limit of~$m$, we introduce a formal expansion parameter $\mu>0$, which encodes the largeness of $a\Omega$: in \eqref{qmn/m-asymptotics}, we replace $m$ by $\mu^2 m$ and write $n = \mu^3 sm + \mu p$, and we replace the leftmost and rightmost expressions in \eqref{eqn: qmn inequal alt1} by 
$\mu a \Omega \! \left(
1-\frac{(\sigma_{S,I}/\Omega)}{\mu^2 m}
\right)$. 
We then expand the replaced form of \eqref{eqn: qmn inequal alt1} in $\mu$ as $\mu\to\infty$, and at the end set $\mu=1$.
This shows that the allowed values of $s$ and $p$ have the large $a$ expansions 
\begin{subequations}
\begin{align}
s & = 
\frac{a\Omega}{\pi} - \frac12  
+ 
\frac{1}{2 \pi a\Omega} 
+ 
O \bigl({(a\Omega)}^{-3} \bigr)
\,,
\\
p_{\text{min},\text{max}} & = 
\frac{-(\sigma_{S,I}/\Omega)a\Omega}{\pi} 
+ 
\frac14
+ 
\frac{(\sigma_{S,I}/\Omega)}{2 \pi a\Omega} 
+ 
O \bigl({(a\Omega)}^{-3} \bigr)
\,. 
\end{align}
\end{subequations}
In the denominator in \eqref{eqn:G2max}, we then have, using \cite[(10.20.4) and (9.7.9)]{NIST},
\begin{align}
q_{mn} J^2_{m+1}(q_{mn})
&= 
q_{mn} J^2_{m+1}\left((m+1)
\left(\frac{q_{mn}}{m+1}\right)\right)
\notag 
\\
&\sim 
\frac{2}{\pi}
\cos^2 \! \left[
(m+1) \left(
\sqrt{\left(\frac{q_{mn}}{m+1}\right)^2 -1} 
- \arccos \! \left(\frac{m+1}{q_{mn}}\right)\right)
- \frac{\pi}{4}
\right]
\,. 
\label{eq:qJ^2q-as}
\end{align}
The argument of $\cos^2$ in \eqref{eq:qJ^2q-as} may be shown to be 
$\pi (s m + p - 1) + O \bigl({(a\Omega)}^{-1} \bigr)$ as $a\to\infty$, by introducing the formal expansion parameter $\mu$ as above, expanding as $\mu\to\infty$, and finally setting $\mu=1$. 
Therefore $q_{mn} J^2_{m+1}(q_{mn}) \sim 2/\pi$. 
Substituting this in \eqref{eqn:G2max} gives 
\begin{equation}
    \GG_>^2(a)~\sim~\frac{\pi}{2a}\sum_{m=M}^\infty\sum_{n=\Nmin}^{\Nmax}J^2_{m}\lr{\frac Raq_{mn}}\sigma(m\Omega-q_{mn}/a)\,.
\label{eqn:G2max-i1}
\end{equation}

To bound~\eqref{eqn:G2max-i1}, we recall from \cite[(10.20.4) and (9.7.5)]{NIST}  
that 
\begin{equation}
J^2_{\nu}\lr{\nu z}~\sim~\frac{1}{2\pi\sqrt{1-z^2}}\frac{\ee^{-2 \nu\xi(z)}}{\nu}
\label{eq:Jsq-uniform-as-smallz}
\end{equation} 
as $\nu\to\infty$, valid uniformly for $z\in (0,b]$ for any $b\in (0,1)$, where $\xi(z)\coloneq\ln \! \left(\frac{1+\sqrt{1-z^2}}{z}\right)-\sqrt{1-z^2}>0$. 
In the argument of $J_m$ in~\eqref{eqn:G2max-i1}, \eqref{eqn: qmn inequal alt1} gives $\frac Raq_{mn} = m v \left(1 + O(m^{-1})\right)$, where we recall that $v = R\Omega < 1$. Applying \eqref{eq:Jsq-uniform-as-smallz}, we therefore have 
\begin{equation}
J^2_{m}\lr{\frac Raq_{mn}}~\sim~\frac{1}{2\pi\sqrt{1-v^2}}\frac{\ee^{-2 m\xi(v)}}{m}
\end{equation} 
as $m\to\infty$. 
As $\sigma$ is smooth with compact support, it is bounded, and hence for sufficiently large $a$ we have 
\begin{align}
\GG_>^2(a)~\le~\frac{A}{4a\sqrt{1-v^2}}\sum_{m=M}^\infty\sum_{n=\Nmin}^{\Nmax}\frac{\ee^{-2m \xi(v)}}{m} \,,
\label{eqn:G2max-i2}
\end{align}
where $A$ is a positive constant. 
The sum over $n$ in \eqref{eqn:G2max-i2} gives the factor $\Nmax-\Nmin+1$, and for sufficiently large $a$ this factor is bounded by $(a\Omega)mB$, for some positive constant~$B$, by the observations made below~\eqref{eq:wmn-def}. Hence, for sufficiently large~$a$, we find
\begin{equation}\label{eq:G2 est}
\GG_>^2(a)~\le~\frac{AB\Omega}{4\sqrt{1-v^2}}\sum_{m=M}^\infty\ee^{-2m \xi(v)}~=~\frac{AB\Omega}{4\sqrt{1-v^2}}\frac{\ee^{-2M\xi(v)}}{1-\ee^{-2\xi(v)}}~=~O\lr{\ee^{-2M\xi(v)}}
\,, 
\end{equation}
where we recall that that $M = \lfloor(a\Omega)^2\rfloor$.  

Consider next $\GG^2_<$~\eqref{eqn:G2min}. The range of $m$ in the sum is $\sigma_S/\Omega < m \le M-1$, where we recall that 
$M= \lfloor(a\Omega)^2\rfloor$, and $a$ is assumed so large that $M\ge1$ and $M > 1 + \sigma_S/\Omega$. As $a\to\infty$, the task hence is to control values of $m$ from an $a$-independent lower limit, which may have either sign, to an upper limit that is asymptotic to~$(a\Omega)^2$. 

For fixed~$m$, using in \eqref{eq:qmn-rawbounds} the McMahon expansion \eqref{eqn:m-McMahon expand} with $m$ replaced by $|m|$ shows that values of $n$ in $\GG^2_<$~\eqref{eqn:G2min} grow proportionally to~$a$. Further, for all $m$ in $\GG^2_<$~\eqref{eqn:G2min}, it is seen that the $k$th terms in \eqref{eqn:X-mz}
and 
\eqref{eqn:X-mz-prime} are respectively $O(a^{1-2k})$ and $O(a^{-2k})$. We may therefore write 
\begin{align}
\GG^2_<(a)~=~\frac{\pi}{2a}\sum_{\substack{m\in\mc{C}^{2}\!\!,\\ m\leq M-1}}\sum_{n=\Nmin}^{\Nmax}J^2_{|m|}\lr{\tfrac Ra {\hat X}_{|m|}(\pi\amn)}{\hat X}'_{|m|}(\pi\amn)\sigma\lr{m\Omega-\tfrac1a {\hat X}_{|m|}(\pi\amn)}
    + {\hat O}^{\infty}(a^{-1})\,, 
\label{eqn:G2min-Xhat}
\end{align}
where, as in~\eqref{eqn:G1 max expand}, 
${\hat X}$ and ${\hat X}'$ are the truncations of 
\eqref{eqn:X-mz}
and 
\eqref{eqn:X-mz-prime}
as defined in Appendix~\ref{app: asymptotic expansion}, with the order of truncation suppressed but sufficiently high for what follows, and the error term ${\hat O}^{\infty}(a^{-1})$ can be made to vanish faster than any given inverse power of $a$ as $a\to\infty$ by taking the order of truncation sufficiently high. 

By the support of~$\sigma$, we may extend the range of the $n$-sum in \eqref{eqn:G2min-Xhat} to $\Nmin-1\leq n\leq\Nmax+1$. Then, the Euler-Maclaurin formula \eqref{eqn:Euler Mac} of order $l$ gives
\begin{subequations}
\label{eqn:J12}
    \begin{align}
   \GG^2_<(a)&~=~J_1+J_2
   + {\hat O}^{\infty}(a^{-1})\,, 
   \label{eq:G2less-split}
   \\\label{eqn:J1}
        J_1&~=~\frac{\pi}{a}\!\!\!\!\sum_{\substack{m\in\mc{C}^{2}\!\!,\\ m\leq M-1}}\!\!\!\int_{\Nmin-1}^{\Nmax+1}\dd x\,\frac12 J^2_{|m|}\lr{\tfrac Ra {\hat X}_{|m|}(\pi\amx)}{\hat X}'_{|m|}(\pi\amx)\sigma\lr{m\Omega-\tfrac1a {\hat X}_{|m|}(\pi\amx)}\,,\\
        J_2&~=~\frac{\pi}{2a(l!)}\!\!\!\!\sum_{\substack{m\in\mc{C}^{2}\!\!,\\ m\leq M-1}}\!\!\!\int_{\Nmin-1}^{\Nmax+1}\dd x\,\wt{B}_l(1-x)\tdn{}{x}{l} J^2_{|m|}\lr{\tfrac Ra {\hat X}_{|m|}(\pi\amx)} {\hat X}'_{|m|}(\pi\amx)\sigma\lr{m\Omega-\tfrac1a {\hat X}_{|m|}(\pi\amx)}\,.\label{eqn:J2}
    \end{align}
\end{subequations}

To bound $J_2$~\eqref{eqn:J2}, 
we recall from 
\eqref{eqn:alpha-mn} that since $x\in(\Nmin-1,\Nmax+1)$, we have to leading order ${\hat X}_{|m|}(\pi\amx)\sim\pi\amx$ and ${\hat X}_{|m|}'(\pi\amx)\sim1$. Changing variables by $s=\frac\pi a\amx$ then gives 
\begin{equation}
    J_2~\sim~\frac{1}{2(l!)}\lr{\frac\pi a}^l\sum_{\substack{m\in\mc{C}^{2}\!\!,\\ m\leq M-1}}\int_{\frac\pi a \alpha_{|m|(\Nmin-1)}}^{\frac\pi a \alpha_{|m|(\Nmax-1)}}\dd s\,\wt{B}_l\lr{1 - \tfrac a\pi s + \amo}\tdn{}{s}{l} J^2_{|m|}\lr{Rs}\sigma(m\Omega-s)\,.
\label{eq:J2-as-leadingraw}
\end{equation}
As $a\to\infty$, using in \eqref{eqn: qmn inequal} the McMahon expansion \eqref{eqn:m-McMahon expand} with $m\to|m|$ shows that 
the upper and lower limits of integration in \eqref{eq:J2-as-leadingraw} are respectively asymptotic to $\Omega m - \sigma_I$ and $\Omega m - \sigma_S$. 
Writing $s = \Omega (m + t)$, where at large $a$ we have $-\sigma_S/\Omega \lesssim t \lesssim -\sigma_I/\Omega$, 
\eqref{eq:Jsq-uniform-as-smallz} gives at large positive $m$ the leading asymptotic behaviour 
\begin{align}
J^2_{|m|}\lr{Rs} 
~=~ 
J^2_{m} \bigl(m\lr{v + t/m} \bigr)
~&\sim~
\frac{1}{2\pi\sqrt{1-\lr{v + t/m}^2}}
\frac{\ee^{-2 m\xi \lr{v + t/m}}}{m}
\notag\\
~&\sim~
\frac{1}{2\pi\sqrt{1-v^2}}
\frac{\ee^{-2 m\xi(v) + 2 t \sqrt{1-v^2}/v}}{m}
\,. 
\end{align} 
For sufficiently large~$a$, the integral in \eqref{eq:J2-as-leadingraw} is hence bounded in absolute value by a constant that is independent of $a$ and~$m$. As the number of terms in the sum over $m$ is asymptotic to ${(a\Omega)}^2$, we therefore have $J_2 = O(a^{2-l})$ as $a\to\infty$. 

To estimate $J_1$~\eqref{eqn:J1}, changing variables by $z=\frac1a {\hat X}_{|m|}(\pi\amx)$ gives 
\begin{align}
J_1 ~&=~ \sum_{\substack{m\in\mc{C}^{2}\!\!,\\ m\leq M-1}}\int_0^\infty\dd z\,\frac12J_{|m|}^2(Rz)\sigma(m\Omega-z)
\notag\\
~&=~
\sum_{m\in\mc{C}^{2}}\int_0^\infty\dd z\,\frac12J_{|m|}^2(Rz)\sigma(m\Omega-z) + O\!\left(M^{-1}\ee^{- 2 M \xi(v)}\right)
\,,
\label{eq:J1-finalestimate}
\end{align} 
where in the first equality we have extended the range of integration to the positive real line by the support of~$\sigma$, and in the second equality extended the sum to the full set~$\mc{C}^2$, noting that in the added terms the support of $\sigma$ restricts the integral to be over $m\Omega - \sigma_S \le z \le m\Omega - \sigma_I$ where $m\ge M = \lfloor(a\Omega)^2\rfloor$, whereby the error term follows using \eqref{eq:Jsq-uniform-as-smallz} with $v = R\Omega$. 

Choosing now the truncation in $\hat X$ so high that ${\hat O}^{\infty}(a^{-1})$ in 
\eqref{eqn:G2min-Xhat} is $o(a^{-3})$, and choosing $l>5$ so that $J_2 = o(a^{-3})$, combining \eqref{eq:G2less-split} and \eqref{eq:J1-finalestimate} gives 
\begin{equation}
\GG^2_<(a)~=~\sum_{m\in\mc{C}^{2}}\int_0^\infty\dd z\,\frac12J_{|m|}^2(Rz)\sigma(m\Omega-z)+o(a^{-3}).
\label{eq:G2less-final}
\end{equation}

Finally, combining 
\eqref{eq:G2 est}
and 
\eqref{eq:G2less-final}
gives 
\begin{equation}\label{eqn:G2 expand}
\GG^2(a)~=~\sum_{m\in\mc{C}^{2}}\int_0^\infty\dd z\,\frac12J_{|m|}^2(Rz)\sigma(m\Omega-z)+o(a^{-3}).
\end{equation}

\subsection{Combining $\GG^1$ and $\GG^2$}
The full expression for the leading and subleading terms for $\GG(a)$ \eqref{eqn:G vac a} is now obtained by combining $\GG^1$~\eqref{eqn:G1 expand} and $\GG^2$~\eqref{eqn:G2 expand}, with the result
\begin{align}
    \GG(a)~&=~\sum_{m\in\ZZ}\int_0^\infty\dd z\,\frac12 J_{|m|}^2(Rz)\sigma(m\Omega-z)\notag\\
&\hspace{5ex}
-\frac{R^2}{4a^3}\sum_{m\in\mc{C}^1}\lr{\sum_{n=1}^{\infty}\lrb{\frac{\pi^3}{2}\lr{n+\tfrac14}^2-\frac{3\pi}{16}-\frac{q_{1n}^2}{q_{1n}J_2^2(q_{1n})}}+\frac{\pi(5\pi^2-18)}{128}}\sigma(m\Omega)\delta_{|m|1}+\,o(a^{-3})
\notag\\
~&=~\int_{\RR}\dd\Ebar\,\sigma(\Ebar)\bigg(\frac12\sum_{m>\Ebar/\Omega}J^2_{|m|}(mv-\Ebar R)
\notag\\
&\hspace{5ex}
-\frac{R^2}{4a^3}\lr{\sum_{n=1}^\infty\lrb{\frac{\pi^3}{2}(n+\tfrac14)^2-\frac{3\pi}{16}-\frac{q_{1n}^2}{q_{1n}J^2_{2}(q_{1n})}}+\frac{\pi(5\pi^2-18)}{128}}\delta(|\Ebar|-\Omega)+o(a^{-3})
    \bigg)
\,,\label{eqn:G expand}
\end{align}
taking in the second equality everything under the integral over~$\Ebar$. This is equation \eqref{eq:G-vacuum-result} in the main text.

\section{Large-$a$ asymptotics: thermal contribution}
\label{app:large a therm}

In this appendix, we find the asymptotic behaviour of the finite temperature contribution $\Delta\GG_\beta$ \eqref{eqn:G beta a} to the integrated response function in the large-$a$ regime, leading to the results of Section~\ref{sec: large a beta sec}. We follow closely the structure of Appendix~\ref{app: large a vac}.

\subsection{Decomposition of the finite temperature contribution to the integrated response}

We decompose the finite temperature contribution $\Delta\GG_\beta$ \eqref{eqn:G beta a} to the integrated response as 
\begin{subequations}
    \begin{align}
        \Delta\GG_\beta(a)&~=~\GG^+(a)+\GG^-(a)\,,
        \label{eqn:DeltaGbeta-split}\\\label{eqn:Gplus}
        \GG^+(a)&~=~\frac{1}{a}\sum_{m,n} n(\tfrac{\beta}{a}\qmn)\frac{J^2_{|m|}\lr{\frac{R}{a}\qmn}}{\qmn J^2_{|m|+1}(\qmn)}\sigma(m\Omega-\qmn/a)\,,\\\label{eqn:Gminus}
        \GG^-(a)&~=~\frac{1}{a}\sum_{m,n} n(\tfrac{\beta}{a}\qmn)\frac{J^2_{|m|}\lr{\frac{R}{a}\qmn}}{\qmn J^2_{|m|+1}(\qmn)}\sigma(\qmn/a - m\Omega)\,,
    \end{align}
\end{subequations}
where the sums are over $m\in\ZZ$ and $n\in\NN$. 

As in Appendix~\ref{app: large a vac}, we recall that $\sigma\in C_0^\infty(\RR)$, we set $\sigma_I\coloneq \inf\supp(\sigma)$ and $\sigma_S\coloneq\sup\supp(\sigma)$, and we assume that either $\supp(\sigma)\subset\RR_{>0}$ with $0<\sigma_I<\Omega<\sigma_S$ or $\supp(\sigma)\subset\RR_{<0}$ with $\sigma_I<-\Omega<\sigma_S<0$. Since $\Delta\FF_\beta$ \eqref{eq:F beta vac KG} is even in~$\Ebar$, we shall for now assume that $\supp(\sigma)\subset\RR_{>0}$ with $0<\sigma_I<\Omega<\sigma_S$, and relax this assumption later in~\eqref{eqn: final DBG}.

For fixed $m$ in \eqref{eqn:Gplus} and~\eqref{eqn:Gminus}, the values of $n$ in the sums over $n$ are restricted by 
\begin{subequations}\label{eqn:qmn condits2}
\begin{align}
&m-\frac{\sigma_S}{\Omega}
~<~\frac{\qmn}{a\Omega}~<~m -\frac{\sigma_I}{\Omega}
\hspace{4ex}\text{in~}\GG^+\,,
\label{eqn:qmn condits2+}
\\[1ex]
&m+\frac{\sigma_I}{\Omega}
~<~\frac{\qmn}{a\Omega}~<~m +\frac{\sigma_S}{\Omega}
\hspace{4ex}\text{in~}\GG^-\,.
\label{eqn:qmn condits2-}
\end{align}
\end{subequations}
In $\GG^+$, the condition \eqref{eqn:qmn condits2+} is the same as~\eqref{eqn: qmn inequal}, and the values of $n$ for sufficiently large $a$ are hence are as listed below~\eqref{eqn: qmn inequal}, in terms of the sets $\mc{C}^{1}$ and $\mc{C}^{2}$ and the quantities $\Nmin$ and $\Nmax$ introduced therein. We shall here denote $\mc{C}^{1}$, $\mc{C}^{2}$, $\Nmin$ and $\Nmax$ with an additional superscript, by $\mc{C}^{1+}$,  $\mc{C}^{2+}$, $\Nmin^+$ and~$\Nmax^+$. In $\GG^-$, a similar analysis of \eqref{eqn:qmn condits2-} holds with $\sigma_I \to - \sigma_S$ and $\sigma_S \to - \sigma_I$, giving $\mc{C}^{1-}$,  $\mc{C}^{2-}$, $\Nmin^-$ and~$\Nmax^-$, as follows: 
\begin{itemize}
        \item[-] For $m \leq -\frac{\sigma_S}{\Omega}$, no $n$ satisfy~\eqref{eqn:qmn condits2-}.
        \item[-] For $-\frac{\sigma_S}{\Omega}<m\leq-\frac{\sigma_I}{\Omega}$, $n$ satisfies $1\leq n\leq\Nmax^-$. 
        We denote the set of these $m$ by $\mc{C}^{1-}$. 
        \item[-] For $-\frac{\sigma_I}{\Omega} < m$, $n$ satisfies $\Nmin^-\leq n\leq\Nmax^-$. 
        We denote the set of these $m$ by $\mc{C}^{2-}$. 
    \end{itemize}
We recall that the notation suppresses the dependence of $\Nmin^{\pm}$ and $\Nmax^{\pm}$ on $a$ and~$m$. We also recall that the sets $\mc{C}^{1\pm}$ are finite and note that $\pm1\in\mc{C}^{1\pm}$. 

With this notation, we may split $\GG^\pm$ as
\begin{subequations}
\label{eq:Gpm-split}
    \begin{align}
        \GG^\pm(a)&~=~\GG^{1\pm}(a)+\GG^{2\pm}(a)\,,
        \label{eq:Gpm-split-firstline}\\\label{eqn:G1pm b}
        \GG^{1\pm}(a)&~=~\frac{1}{a}\sum_{m\in\mc{C}^{1\pm}}\sum_{n=1}^{\Nmax^\pm} n(\tfrac{\beta}{a}\qmn)\frac{J^2_{|m|}\lr{\frac{R}{a}\qmn}}{\qmn J^2_{|m|+1}(\qmn)}\sigma\lr{\pm(m\Omega-\qmn/a)}\,,\\
         \GG^{2\pm}(a)&~=~\frac{1}{a}\sum_{m\in\mc{C}^{2\pm}}\sum_{n=\Nmin^\pm}^{\Nmax^\pm} n(\tfrac{\beta}{a}\qmn)\frac{J^2_{|m|}\lr{\frac{R}{a}\qmn}}{\qmn J^2_{|m|+1}(\qmn)}\sigma\lr{\pm(m\Omega-\qmn/a)}\,.\label{eqn:G2pm b}
    \end{align}
\end{subequations}
We now proceed to adapt the large $a$ analysis of $\GG(a)$ \eqref{eq:G-split} in Appendix \ref{app: large a vac} to $\GG^\pm(a)$~\eqref{eq:Gpm-split}. 
We shall see that quantitative differences arise from the small argument behaviour of the Planckian factor $n(\tfrac{\beta}{a}\qmn)$ in $\GG^\pm(a)$~\eqref{eq:Gpm-split}.

\subsection{$\GG^{1\pm}$}

We proceed with $\GG^{1\pm}$ \eqref{eqn:G1pm b} as in Section~\ref{appsec:GG1}, setting again $N\coloneq\lfloor\lr{\frac a R}^p\rfloor$ with $p\in(0,\frac14)$, and writing
\begin{subequations}
\label{eqn:G1pm def}
    \begin{align}
        \GG^{1\pm}(a)&~=~\GG^{1\pm}_<(a)+\GG^{1\pm}_>(a)\,,
        \label{eqn:G1pm-decomp}\\
        \label{eqn:G1pm smol}
        \GG^{1\pm}_<(a)&~=~\frac{1}{a}\sum_{m\in\mc{C}^{1\pm}}\sum_{n=1}^{N-1} n(\tfrac{\beta}{a}\qmn)\frac{J^2_{|m|}\lr{\frac{R}{a}\qmn}}{\qmn J^2_{|m|+1}(\qmn)}\sigma\lr{\pm(m\Omega-\qmn/a)}\,,\\
        \GG^{1\pm}_>(a)&~=~\frac{1}{a}\sum_{m\in\mc{C}^{1\pm}}\sum_{n=N}^{\Nmax^\pm} n(\tfrac{\beta}{a}\qmn)\frac{J^2_{|m|}\lr{\frac{R}{a}\qmn}}{\qmn J^2_{|m|+1}(\qmn)}\sigma\lr{\pm(m\Omega-\qmn/a)}\,,\label{eqn:G1pm lrg}
    \end{align}
\end{subequations}
using that $N<\Nmax^\pm$ for sufficiently large~$a$. 

With $\GG^{1\pm}_<$~\eqref{eqn:G1pm smol}, we proceed as with $\GG_<^1$~\eqref{eqn:g1 less}. We find that the counterpart of \eqref{eqn:G1 minus-improved} is 
\begin{equation}\label{eqn:G1pm mini}
    \GG_<^{1\pm}(a)~=~\frac{R^2}{4\beta a^2}\sigma(\Omega)\sum_{n=1}^{N-1}\frac{q_{1n}}{q_{1n}J^2_2(q_{1n})} + o(a^{-2})\,, 
\end{equation}
using that $\pm1\in\mc{C}^{1\pm}$. 

With $\GG^{1\pm}_{>}$~\eqref{eqn:G1pm lrg}, we proceed as with $\GG_>^1$~\eqref{eqn:g1 greater}. The counterpart of \eqref{eqn:I-collected} is 
\begin{subequations}
\label{eq:Ipm1234-and-defs}
\begin{equation}
    \GG_>^{1\pm}(a)~=~I_1^\pm+I_2^\pm+I_3^\pm+I_4^\pm + {\hat O}^{\infty}(a^{-1})\,,\hspace{35em}
\label{eq:Ipm1234}
\end{equation}
    \begin{align}
        I_1^\pm&=\frac{\pi}{2a}\sum_{m\in\mc{C}^{1\pm}}\int_N^\infty\dd x\,n\lr{\tfrac\beta a {\hat X}_{|m|}(\pi\amx)}J^2_{|m|}\lr{\tfrac Ra {\hat X}_{|m|}(\pi\amx)} {\hat X}'_{|m|}(\pi\amx)\sigma\lr{\pm\lr{m\Omega-\tfrac1a {\hat X}_{|m|}(\pi\amx)}}\,,\\
        I_2^\pm&=\frac{\pi}{4a}\sum_{m\in\mc{C}^{1\pm}}n\lr{\tfrac\beta a {\hat X}_{|m|}(\pi\amN)}J^2_{|m|}\lr{\tfrac Ra {\hat X}_{|m|}(\pi\amN)} {\hat X}'_{|m|}(\pi\amN)\sigma\lr{\pm\lr{m\Omega-\tfrac1a {\hat X}_{|m|}(\pi\amN)}}\,,\\
        I_3^\pm&=-\frac{\pi}{24 a}\!\!\sum_{m\in\mc{C}^{1\pm}}\!\!\!\td{}{x}\!\bigg[n\lr{\tfrac\beta a {\hat X}_{|m|}(\pi\amx)}\!J^2_{|m|}\lr{\tfrac Ra {\hat X}_{|m|}(\pi\amx)}\lrl{{\hat X}'_{|m|}(\pi\amx)\sigma\lr{\pm\lr{m\Omega-\tfrac1a {\hat X}_{|m|}(\pi\amx)}}\!\!\bigg]}{x=N}\,,\\
         I_4^\pm&=\frac{\pi}{12 a}\sum_{m\in\mc{C}^{1\pm}}\int_N^\infty\dd x\,\wt{B}_3(1-x)\tdn{}{x}{3}\bigg[n\lr{\tfrac\beta a {\hat X}_{|m|}(\pi\amx)}\nonumber\\
         &\hphantom{abdefghijklmnopqrstuvwxyzab}\times J^2_{|m|}\lr{\tfrac Ra {\hat X}_{|m|}(\pi\amx)} {\hat X}'_{|m|}(\pi\amx)\sigma\lr{\pm\lr{m\Omega-\tfrac1a {\hat X}_{|m|}(\pi\amx)}}\bigg]\,.
    \end{align}
\end{subequations}
Estimates similar to those in Section \ref{appsec:GG1} give 
\begin{subequations}\label{eqn:G1pm maxi}
    \begin{align}
        I_1^\pm&~=~I_0^\pm+\Delta I_1^\pm\,,\\
        I_0^\pm&~=~\sum_{m\in\mc{C}^{1\pm}}\int_0^\infty\dd z\,\frac12n(\beta z)J^2_{|m|}(Rz)\sigma\lr{\pm\lr{m\Omega-z}}\,,
        \label{eq:I0pm-def}\\
        \Delta I_1^\pm&~=~-\frac{R^2}{16\beta a^2}\sigma(\Omega)\lr{\pi^2(N+\tfrac14)^2-\tfrac{3}{4}}+o(a^{-2})\,,\\
        I_2^\pm&~=~\frac{\pi^2 R^2}{16\beta a^2}\sigma(\Omega)(N+\tfrac14)+o(a^{-2})\,,\\
        I_3^\pm&~=~-\frac{\pi^2 R^2}{96\beta a^2}\sigma(\Omega)+o(a^{-2})\,,\\
        I_4^\pm&~=~o(a^{-2})\,.
    \end{align}
\end{subequations}
Note that the integrals in $I_0^\pm$ \eqref{eq:I0pm-def} are well defined despite the singularity of $n(\beta z)$ at $z=0$ because $0 \notin \mc{C}^{1\pm}$. 

We choose the truncation in ${\hat X}$ so high that the error term ${\hat O}^{\infty}(a^{-1})$ in \eqref{eq:Ipm1234} is $o(a^{-2})$. From \eqref{eqn:G1pm-decomp}, 
\eqref{eq:Ipm1234}
and~\eqref{eqn:G1pm maxi}, we then find 
\begin{multline}\label{eqn:G1pm still a}
    \GG^{1\pm}(a)~=~\sum_{m\in\mc{C}^{1\pm}}\int_0^\infty\dd z\,\frac12 n(\beta z)J^2_{|m|}(Rz)\sigma\lr{\pm\lr{m\Omega-z}}\\
    -\frac{R^2}{4\beta a^2}\sigma(\Omega)\lrb{\sum_{n=1}^{N-1}\lr{\frac{\pi^2}{2}(n+\tfrac14)-\frac{q_{1n}}{q_{1n}J^2_2(q_{1n})}}+\frac{(23\pi^2-36)}{192}}+o(a^{-2})\,.
\end{multline}
To remove the $N$-dependence in~\eqref{eqn:G1pm still a}, we rewrite the sum over $n$ as
\begin{equation}
    \sum_{n=1}^{N-1}\lr{\frac{\pi^2}{2}(n+\tfrac14)-\frac{q_{1n}}{q_{1n}J^2_2(q_{1n})}}~=~\sum_{n=1}^{\infty}\lr{\frac{\pi^2}{2}(n+\tfrac14)-\frac{q_{1n}}{q_{1n}J^2_2(q_{1n})}}
    -\sum_{n=N}^{\infty}\lr{\frac{\pi^2}{2}(n+\tfrac14)-\frac{q_{1n}}{q_{1n}J^2_2(q_{1n})}}\,,\label{eqn:N depende G1pm}
\end{equation}
and note that the second sum in \eqref{eqn:N depende G1pm} is $O(N^{-1}) = O(a^{-p})$ because the summand is $O(n^{-2})$ by the McMahon expansion~\eqref{eqn:m-McMahon expand}. Hence 
\begin{multline}\label{eqn:G1pm expand}
    \GG^{1\pm}(a)~=~\sum_{m\in\mc{C}^{1\pm}}\int_0^\infty\dd z\,\frac12 n(\beta z)J^2_{|m|}(Rz)\sigma\lr{\pm\lr{m\Omega-z}}\\
    -\frac{R^2}{4\beta a^2}\sigma(\Omega)\lrb{\sum_{n=1}^{\infty}\lr{\frac{\pi^2}{2}(n+\tfrac14)-\frac{q_{1n}}{q_{1n}J^2_2(q_{1n})}}+\frac{(23\pi^2-36)}{192}}+o(a^{-2})\,, 
\end{multline}
which includes both the leading term and the first subleading term. 

\subsection{$\GG^{2\pm}$}
We proceed with $\GG^{2\pm}$ \eqref{eqn:G2pm b} as in Section~\ref{sec:G a}, setting again $M\coloneq\lfloor(a\Omega)^2\rfloor$, and assuming $a$ to be so large that $M\ge1$ and $M > 1 + \sigma_S/\Omega$. The counterpart of 
\eqref{eqn:G2decomp} reads 
\begin{subequations}
    \begin{align}
        \GG^{2\pm}(a)&~=~\GG^{2\pm}_<(a)+\GG^{2\pm}_>(a)\,,\\\label{eqn:G2min therm}
        \GG^{2\pm}_<(a)&~=~\frac{1}{a}\sum_{\substack{m\in\mc{C}^{2\pm}\!\!,\\ m\leq M-1}}\sum_{n=\Nmin^\pm}^{\Nmax^\pm}
        n(\tfrac{\beta}{a}\qmn)
        \frac{J^2_{|m|}\lr{\frac Ra\qmn}}{\qmn J^2_{|m|+1}(\qmn)}\sigma\lr{\pm\lr{m\Omega-\qmn/a}}\,,\\
        \GG^{2\pm}_>(a)&~=~\frac{1}{a}\sum_{m=M}^\infty\sum_{n=\Nmin^\pm}^{\Nmax^\pm}
        n(\tfrac{\beta}{a} q_{mn})
        \frac{J^2_{m}\lr{\frac Ra q_{mn}}}{q_{mn} J^2_{m+1}(q_{mn})}\sigma\lr{\pm\lr{m\Omega-q_{mn}/a}}\,,\label{eqn:G2max therm}
    \end{align}
\end{subequations}
where in \eqref{eqn:G2max therm} we have dropped the absolute values on $m$ because $m\ge1$. 

For $\GG^{2\pm}_>$~\eqref{eqn:G2max therm}, 
the counterpart of \eqref{eq:G2 est} has in the summand the additional factor $n(\tfrac{\beta}{a}\qmn)$, and it suffices to note here that this factor is less than unity for sufficiently large~$a$. This gives $\GG^{2\pm}_>(a) = O\lr{\ee^{-2M\xi(v)}} = o(a^{-2})$. 

For $\GG^{2\pm}_<$~\eqref{eqn:G2min therm},
the counterpart of \eqref{eqn:J12} reads 
\begin{subequations}
    \begin{align}
        \GG^{2\pm}_<(a)&~=~J^\pm_1+J^\pm_2 + {\hat O}^{\infty}(a^{-1})
        \,,
        \label{eqn:J12pm-decomp}
        \\\label{eqn:J1 therm}
        J^\pm_1&~=~\frac{\pi}{a}\sum_{\substack{m\in\mc{C}^{2\pm}\!\!,\\ m\leq M-1}}\int_{\Nmin^\pm-1}^{\Nmax^\pm+1}\dd x\,\frac12 n\lr{\tfrac\beta a {\hat X}_{|m|}(\pi\amx)}J^2_{|m|}\lr{\tfrac Ra {\hat X}_{|m|}(\pi\amx)}\nonumber\\
        &\hphantom{abcdefghijklmnopqrstuvwxyzabcdefg}\times {\hat X}'_{|m|}(\pi\amx)\sigma\lr{\pm\lr{m\Omega-\tfrac1a {\hat X}_{|m|}(\pi\amx)}}\,,\\
        J^\pm_2&~=~\frac{\pi}{2a(l!)}\sum_{\substack{m\in\mc{C}^{2\pm}\!\!,\\ m\leq M-1}}\int_{\Nmin^\pm-1}^{\Nmax^\pm+1}\dd x\,\wt{B}_l(1-x)\tdn{}{x}{l}\bigg[n\lr{\tfrac\beta a {\hat X}_{|m|}(\pi\amx)}\nonumber\\
        &\hphantom{abcdefghijklmnopqr}\times J^2_{|m|}\lr{\tfrac Ra {\hat X}_{|m|}(\pi\amx)} {\hat X}'_{|m|}(\pi\amx)\sigma\lr{\pm\lr{m\Omega-\tfrac1a {\hat X}_{|m|}(\pi\amx)}}\bigg]\,.\label{eqn:J2 therm}
    \end{align}
\end{subequations}
To bound $J^\pm_2$~\eqref{eqn:J2 therm}, the counterpart of \eqref{eq:J2-as-leadingraw} reads 
\begin{equation}
    J^\pm_2~\sim~\frac{1}{2(l!)}\lr{\frac{\pi}{a}}^l\sum_{\substack{m\in\mc{C}^{2\pm}\!\!,\\ m\leq M-1}}\int_{\frac\pi a \alpha_{|m|(\Nmin^\pm-1)}}^{\frac\pi a \alpha_{|m|(\Nmax^\pm+1)}}\dd s\,\wt{B}_l\lr{1 - \tfrac a\pi s + \amo}\tdn{}{s}{l}\bigg[n(\beta s)J^2_{|m|}(Rs)\sigma\lr{\pm\lr{m\Omega-s}}\bigg]\,.
\label{eq:J2pm-as-leadingraw}
\end{equation}
Arguments similar to those below \eqref{eq:J2-as-leadingraw} then show that $J^\pm_2 = O(a^{2-l})$ as $a\to\infty$. 
For $J^\pm_1$~\eqref{eqn:J1 therm}, 
arguments similar to those in \eqref{eq:J1-finalestimate} give 
\begin{equation}
J^\pm_1~=~\sum_{m\in\mc{C}^{2\pm}}\int_0^\infty\dd z\,\frac12 n(\beta z)J^2_{|m|}(Rz)\sigma\lr{\pm\lr{m\Omega-z}}
    + O\!\left(M^{-1}\ee^{- 2 M \xi(v)}\right)\,.
\label{eq:J1pm-finalestimate}
\end{equation}
The integrals in \eqref{eq:J2pm-as-leadingraw} and \eqref{eq:J1pm-finalestimate} are well defined despite the singularity of $n(x)$ at $x=0$, for $J^+_1$ and $J^+_2$ because $0\notin \mc{C}^{2+}$, and for $J^-_1$ and $J^-_2$ because $0\in \mc{C}^{2-}$ but 
$0 \notin \supp(\sigma)$. 

Choosing the truncation in $\hat X$ so high that ${\hat O}^{\infty}(a^{-1})$ in 
\eqref{eqn:J12pm-decomp} $o(a^{-2})$, and choosing $l>4$ so that $J_2 = o(a^{-2})$, combining \eqref{eqn:J12pm-decomp} and \eqref{eq:J1pm-finalestimate} gives 
\begin{equation}\label{eqn:G2pm expand}
    \GG^{2\pm}(a)~=~\sum_{m\in\mc{C}^{2\pm}}\int_0^\infty\dd z\,\frac12 n(\beta z)J^2_{|m|}(Rz)\sigma\lr{\pm\lr{m\Omega-z}}+o(a^{-2})\,.
\end{equation}

\subsection{Combining $\GG^{1\pm}$ and $\GG^{2\pm}$}

Combining $\GG^{1\pm}$ \eqref{eqn:G1pm expand} and $\GG^{2\pm}$ in~\eqref{eq:Gpm-split-firstline}, \eqref{eqn:DeltaGbeta-split}
gives 
\begin{align}
    \Delta\GG_\beta(a)~&=~\sum_{m\in\ZZ}\frac12\int_0^\infty\dd z\,n(\beta z)J_{|m|}^2(Rz)\lrb{\sigma(m\Omega-z)+\sigma(-(m\Omega-z))}
    \notag\\
    & \hspace{4ex}-\frac{R^2}{2\beta a^2}\sigma(\Omega)\lrb{\sum_{n=1}^\infty\lr{\frac{\pi^2}{2}(n+\tfrac14)-\frac{q_{1n}}{q_{1n}J_2^2(q_{1n})}}+\frac{(23\pi^2-36)}{192}}+o(a^{-2})
    \notag\\
~&=~ \int_{\RR}\dd\Ebar\,\sigma(\Ebar)
    \bigg(\frac{1}{2}\sum_{m>\Ebar/\Omega}n(\beta\omega_+)J^2_{|m|}(\omega_+R)
    +\frac{1}{2}\sum_{m>-\Ebar/\Omega}n(\beta\omega_-)J^2_{|m|}(\omega_-R)\notag\\
    & \hspace{7ex}
    -\frac{R^2}{2\beta a^2}\lrb{\sum_{n=1}^\infty\lr{\frac{\pi^2}{2}(n+\tfrac14)-\frac{q_{1n}}{q_{1n}J_2^2(q_{1n})}}+\frac{(23\pi^2-36)}{192}}\delta(\Ebar-\Omega) +o(a^{-2})
    \bigg)\,,
    \label{eqn:almost final DBG}
\end{align}
taking in the second equality everything under the integral over $\Ebar$, and writing 
$\omega_{\pm}=m\Omega\mp\Ebar$.

Finally, we recall that the result \eqref{eqn:almost final DBG} was obtained under the assumption $\supp(\sigma)\subset\RR_{>0}$ with $0<\sigma_I<\Omega<\sigma_S$. 
Including now also the case where $\supp(\sigma)\subset\RR_{<0}$ with $\sigma_I<-\Omega<\sigma_S<0$, 
\eqref{eqn:almost final DBG} is replaced by
\begin{multline}\label{eqn: final DBG}
    \Delta\GG_\beta(a)~=~\int_{\RR}\dd\Ebar\,\sigma(\Ebar)
    \bigg(\frac{1}{2}\sum_{m>|\Ebar|/\Omega}n(\beta\omega_+)J^2_{|m|}(\omega_+R)
    +\frac{1}{2}\sum_{m>-|\Ebar|/\Omega}n(\beta\omega_-)J^2_{|m|}(\omega_-R)\\
    -\frac{R^2}{2\beta a^2}\lrb{\sum_{n=1}^\infty\lr{\frac{\pi^2}{2}(n+\tfrac14)-\frac{q_{1n}}{q_{1n}J_2^2(q_{1n})}}+\frac{(23\pi^2-36)}{192}}\delta(|\Ebar|-\Omega) +o(a^{-2})
    \bigg)\,. 
\end{multline}
This is equation \eqref{eq:DeltaGbeta-integral-maintext} in the main text.

\twocolumngrid
\bibliography{zz_bibliography}

\end{document}